\documentclass[a4paper]{article}
\usepackage[all]{xy}\usepackage[latin1]{inputenc}        
\usepackage[dvips]{graphics,graphicx}
\usepackage{amsfonts,amssymb,amsmath,color,mathrsfs, amstext}
\usepackage{amsbsy, amsopn, amscd, amsxtra, amsthm,authblk}
\usepackage{enumerate,algorithmicx,algorithm}
\usepackage{algpseudocode}
\usepackage{upref}
\usepackage{geometry}
\usepackage[displaymath]{lineno}
\usepackage{float}
\usepackage{yhmath}
\usepackage{booktabs}
\usepackage{siunitx}
\usepackage{bm}
\usepackage{multicol}
\usepackage{multirow}
\usepackage{makecell}
\usepackage{appendix}
\usepackage{exscale}
\usepackage{relsize}
\usepackage[colorlinks,
linkcolor=red,
anchorcolor=red,
citecolor=red
]{hyperref}

\numberwithin{equation}{section}

\DeclareMathOperator{\erf}{erf}
\DeclareMathOperator{\erfc}{erfc}

\newtheorem{theorem}{Theorem}
\newtheorem{lemma}[theorem]{Lemma}

\newtheorem{remark}[theorem]{Remark}
\newtheorem{proposition}[theorem]{Proposition}

\begin{document}
	
	\title{Random batch sum-of-Gaussians method for molecular dynamics simulation of particle systems in the NPT ensemble}
    
    \author[1]{Zhen Jiang\thanks{sanjohnson@sjtu.edu.cn}}
    \author[1,2]{Jiuyang Liang\thanks{jliang@flatironinstitute.org}}
\author[1]{Qi Zhou\thanks{zhouqi1729@sjtu.edu.cn}}

\affil[1]{School  of  Mathematical  Sciences, Shanghai  Jiao  Tong  University,  Shanghai 200240, China}
\affil[2]{Center for Computational Mathematics, Flatiron Institute, Simons Foundation, New York 10010, USA}
	\date{\today}
	
\maketitle
	
\begin{abstract}
In this work, we develop a random batch sum-of-Gaussians (RBSOG) method for molecular dynamics simulations of charged systems in the isothermal-isobaric (NPT) ensemble. We introduce an SOG splitting of the pressure-related $1/r^3$ kernel, yielding a smooth short-/long-range decomposition for instantaneous pressure evaluation. The long-range part is treated in Fourier space by random-batch importance sampling. Because the radial and non-radial pressure components favor different proposals, direct sampling either increases structure-factor evaluations and communication or leads to substantial variance inflation. To address this tradeoff, we introduce a measure-recalibration strategy that reuses Fourier modes drawn from the radial proposal and corrects them for the non-radial target, producing an unbiased pressure estimator with significantly reduced variance and negligible extra cost. The resulting method mitigates pressure artifacts caused by cutoff discontinuities in traditional Ewald-based treatments while preserving near-optimal $O(N)$ complexity. We provide theoretical evidence on pressure decomposition error, consistency of stochastic approximation, and convergence of RBSOG-based MD. Numerical experiments on bulk water, LiTFSI ionic liquids, and DPPC membranes show that RBSOG accurately reproduces key structural and dynamical observables with small batch sizes ($P\sim 100$). In large-scale benchmarks up to $10^7$ atoms on $2048$ CPU cores, RBSOG achieves about an order-of-magnitude speedup over particle-particle particle-mesh in electrostatic calculations for NPT simulations, together with a consistent $4\times$ variance reduction relative to random batch Ewald and excellent weak/strong scalability. Overall, RBSOG provides a practical and scalable route to reduce time-to-solution and communication cost in large-scale NPT simulations.  

\end{abstract}
	
\vspace{0.5cm}
\noindent {\bf Keywords:} Molecular dynamics simulations, NPT ensemble, pressure calculations, sum-of-Gaussians decomposition,
measure-recalibrated importance sampling.

\noindent {\bf AMS subject classifications:} 	
82M37, 65C35, 65T50, 65Y20.

\section{Introduction}
\label{sec::intro}

Molecular dynamics (MD) is a widely used simulation tool in modern science, with applications in chemistry, materials science, and biophysics \cite{Allen2017ComputerLiquids,karplus1990molecular,Axel1987Science}. In MD, particle trajectories are obtained by integrating equations of motion under a chosen statistical ensemble~\cite{Frenkel2001Understanding}. Ensemble averages over these trajectories yield estimates of structural, dynamical, and thermodynamic properties. In this work, we focus on the isothermal-isobaric (NPT) ensemble, which maintains a constant particle number $N$, pressure $P$, and temperature $T$. The NPT ensemble is a standard choice for mimicking experimental conditions in biomolecular simulations and is commonly used for solvated proteins, lipid membranes, viruses, and other biological systems~\cite{tobias1997atomic}. To support such applications, there remains a strong need for accurate, efficient, and scalable algorithms for NPT simulations.

For many systems of interest, electrostatic interactions are a major computational challenge because of their long-range nature~\cite{David2011Nanoscale,French2010Rev}. To address this issue, many fast algorithms have been developed. A large class of such methods is based on Ewald splitting~\cite{Ewald1921AnnPhys}, which decomposes the Coulomb kernel into a rapidly decaying short-range part and a smooth long-range part. The short-range part can be truncated in real space and summed directly, whereas the long-range part is evaluated in reciprocal space, leading to an algorithm with $O(N^{3/2})$ complexity. Algorithms used in mainstream software, such as particle-particle particle-mesh (PPPM) method~\cite{Hockney1988Computer}, particle-mesh Ewald (PME) method~\cite{Darden1993JCP}, and their variants~\cite{essmann1995smooth,shan2005gaussian,Marcello2016Press,DEShaw2020JCP}, use the fast Fourier transform (FFT) to accelerate the long-range calculation and reduce the complexity to $O(N\log N)$. Other fast methods include treecodes~\cite{Barnes1986Nature,lindsay2001particle}, multigrid methods~\cite{trottenberg2000multigrid}, and fast multipole methods (FMM)~\cite{greengard1987fast,greengard1987thesis}. However, many of these approaches rely heavily on global all-to-all communications and become limited at large-scale simulations~\cite{Arnold2013PRE}. For FFT-based schemes, six sequential rounds are needed to perform forward and backward Fourier transforms~\cite{ayala2021scalability}. Periodic FMMs~\cite{yan2018flexibly,pei2023fast} require partitioning the periodic tiling into far-field and near-field regions containing immediate neighboring cells, which requires additional computational and communication costs compared to their non-periodic version. Taken together, these factors make efficient and accurate NPT simulations challenging.

To address these scalability issues, the random batch Ewald (RBE) method~\cite{Jin2020SISC} was proposed. The original RBE formulation is tailored to the NVT ensemble: it avoids communication-intensive FFTs by using importance sampling over Fourier modes, and then integrates trajectories using unbiased long-range force estimators, yielding an $O(N)$ algorithm with strong parallel scalability~\cite{liang2022superscalability}. RBE was later extended to the NPT ensemble~\cite{liang2022iso}, but its performance in NPT is typically weaker than in NVT for two main reasons. First, the NPT implementation of RBE largely inherits the NVT sampling strategy for force evaluation; however, when computing the instantaneous pressure, this choice can lead to substantially higher variance at a fixed batch size. In practice, stable NPT simulations often require batch sizes on the order of $500\text{-}1000$, about $2.5\text{-}5\times$ larger than those used in NVT. Second, the Ewald splitting is discontinuous at the short-range cutoff, which affects both the force and the instantaneous pressure, resulting in significant truncation error and incorrect volume fluctuations. When particles cross the cutoff, this discontinuity may also produce artificial pressure jumps and introduce artifacts in the dynamics. For systems sensitive to pressure fluctuations, such as phospholipid bilayers under semi-isotropic pressure coupling, these artificial effects can amplify fluctuations in membrane area and bilayer spacing beyond their expected ranges~\cite{ahmed2010effect,kim2023neighbor}. These issues motivate new NPT algorithms that preserve high parallel scalability while reducing variance and improving stability.

In this paper, we develop a fast random batch sum-of-Gaussians (RBSOG) method for MD simulations in the NPT ensemble. RBSOG was originally introduced for long-range force calculations in NVT simulations~\cite{liang2023random}, where an SOG approximation of the $1/r$ kernel replaces Ewald splitting and removes force discontinuities induced by real-space cutoffs. Here, we extend this framework to pressure-tensor evaluation by constructing an SOG approximation of the $1/r^3$ kernel, yielding a high-order smooth pressure decomposition. A key NPT challenge is that the radial and non-radial pressure components favor different importance measures: sampling them independently increases structure-factor evaluations and communication, while forcing both components to share one proposal can significantly increase variance. To address this tradeoff, we introduce a measure-recalibration strategy that recycles Fourier modes sampled from the radial proposal and corrects them for the non-radial target, so that the same structure factors can be reused to construct an unbiased pressure estimator with substantially reduced variance and negligible extra cost. The resulting method preserves $O(N)$ complexity while improving variance reduction and parallel scalability. Numerical experiments show that, compared with PPPM, RBSOG achieves better weak and strong scaling and achieves more than an order-of-magnitude speedup in pressure evaluation in large-scale benchmarks with up to $10^7$ atoms and 2048 cores. Relative to RBE, it reduces pressure variance by about a factor of $4$ at the same batch size and suppresses nonphysical surface fluctuations in semi-isotropic membrane simulations. The framework is broadly applicable to NPT simulations and can be extended to other interaction kernels by combining it with kernel-independent SOG techniques~\cite{greengard2018anisotropic,gao2022kernel,lin2025weighted}.

The rest of the paper is organized as follows. In Section~\ref{sec::Prelim}, we review the necessary preliminaries for the NPT ensemble and kernel decompositions. Section~\ref{sec::RBSOG_NPT} presents the RBSOG method in detail, including SOG-based pressure-tensor formulas, the random batch importance sampling with measure recalibration strategy, and the associated decomposition error and MD convergence analysis. In Section~\ref{sec::NumResult}, we report systematic numerical results to benchmark the accuracy of RBSOG-based NPT simulations, and compare it with the PPPM and RBE methods in terms of CPU time per step, strong and weak parallel scalability, and variance reduction. Concluding remarks are given in Section~\ref{sec::Conclusion}.

\section{Preliminaries}
\label{sec::Prelim}

In this section, we review preliminaries on the NPT ensemble, the evaluation of the instantaneous pressure tensor, and kernel-decomposition techniques for long-range electrostatic interactions. These elements form the theoretical
foundation of this work.

\subsection{NPT ensemble and instantaneous pressure}
\label{sec::Distribution}
The NPT ensemble describes a system with fixed particle number $N$ and temperature $T$, subject to an external pressure $P$. We parameterize the simulation cell $\Omega$ by the tensor
\begin{equation}
\bm{h}=[\bm{h}_1,\bm{h}_2,\bm{h}_3]\in \mathbb{R}^{3\times 3},
\end{equation}
whose determinant $V:=\det(\bm{h})$ gives the cell volume. It is convenient to separate volume and shape by writing
$\bm{h}=V^{1/3}\bm{h}_0$, where $\det(\bm{h}_0)=1$ so that $\bm{h}_0$ encodes the cell shape. A microscopic state $S$ in the NPT ensemble has statistical weight governed by the Boltzmann distribution~\cite{Frenkel2001Understanding}
\begin{equation}
\label{eq::NPT}
\mathcal{P}(S)\propto \exp\!\left[-\beta\,(E+PV)\right],
\end{equation}
where $E$ is the total energy, and $\beta=(k_BT)^{-1}$ with $k_B$ the Boltzmann constant.

For a system of $N$ particles, let $\bm{r}=\{\bm{r}_i\}_{i=1}^N$ and $\bm{p}=\{\bm{p}_i\}_{i=1}^N$ denote positions and momenta. For a fixed cell shape $\bm{h}_0$, the NPT partition function can be written as~\cite{Frenkel2001Understanding}
\begin{equation}
\label{eq::DistFuncNPT}
\Delta(N,P,T)
=\int e^{-\beta(E+PV)}\,\mathrm{d}\bm{r}\,\mathrm{d}\bm{p}\,\mathrm{d}V
=\int_{0}^{\infty} e^{-\beta PV}\, Q(N,V,T)\,\mathrm{d}V,
\end{equation}
where $\mathrm{d}\bm{r}=\mathrm{d}\bm{r}_1\cdots \mathrm{d}\bm{r}_N$ and $\mathrm{d}\bm{p}=\mathrm{d}\bm{p}_1\cdots \mathrm{d}\bm{p}_N$, and
\begin{equation}
\label{eq::DistFuncNVT}
Q(N,V,T)=\int_{\Omega(V,\bm{h}_0)\times \mathbb{R}^{3N}} e^{-\beta E(\bm{r},\bm{p})}\,\mathrm{d}\bm{r}\,\mathrm{d}\bm{p}
\end{equation}
is the canonical (NVT) partition function at fixed volume. For a fully flexible cell, where the shape $\bm{h}_0$ is allowed to fluctuate, the NPT partition function can be formulated as
\begin{equation}
\label{eq::h_DistFuncNPT}
\Delta(N,P,T)=\int_{\det(\bm{h})>0}\det(\bm{h})^{-2}\,e^{-\beta P\det(\bm{h})}\,Q(N,\bm{h},T)\,\mathrm{d}\bm{h},
\end{equation}
where $\{\bm{h}:\det(\bm{h})>0\}\subset \mathbb{R}^{3\times 3}$ and $\mathrm{d}\bm{h}$ denotes integration over the independent entries of $\bm{h}$. The factor $\det(\bm{h})^{-2}$ arises from the corresponding Jacobian in the change of variables used to express the measure. In this setting, it is convenient to introduce scaled coordinates $\bm{s}_i=\bm{h}^{-1}\bm{r}_i$ (so that $\bm{r}_i=\bm{h}\bm{s}_i$) and the conjugate momenta $\bm{p}_i^{s}=\bm{h}^{\top}\bm{p}_i$. With these variables, the NVT partition function $Q(N,\bm{h},T)$ can be expressed as
\begin{equation}
\label{eq::Q_NHT}
Q(N,\bm{h},T)=\int \exp\!\left[-\beta\,(K+U)\right]\,\mathrm{d}\bm{s}\,\mathrm{d}\bm{p}^{s},
\end{equation}
where $K$ and $U$ denote the kinetic and potential energies, respectively, and
$\mathrm{d}\bm{s}=\mathrm{d}\bm{s}_1\cdots \mathrm{d}\bm{s}_N$ and $\mathrm{d}\bm{p}^{s}=\mathrm{d}\bm{p}_1^{s}\cdots \mathrm{d}\bm{p}_N^{s}$.
We refer to~\cite{Frenkel2001Understanding,LeimkuhlerMatthews2015MD} for detailed derivations and further discussion.

According to classical statistical mechanics, the (macroscopic) pressure tensor
$\widetilde{\bm{P}}$ for a general simulation cell can be obtained from the NVT
partition function $Q(N,\bm{h},T)$ as
\begin{equation}
\label{eq::MacroPress_re}
\widetilde{\bm{P}}
=\frac{1}{\beta \det(\bm{h})}
\left(\frac{\partial \log Q(N,\bm{h},T)}{\partial \bm{h}}\right)\bm{h}^{\top}.
\end{equation}
Using Eq.~\eqref{eq::Q_NHT}, the components of $\widetilde{\bm{P}}$ can be written as
\begin{equation}
\label{eq::FullyPress_re}
\widetilde{P}_{\mu \nu}
=\frac{1}{Q(N,\bm{h},T)}
\int \exp\!\left[-\beta\left(K+U\right)\right]
\left[
\frac{1}{\det(\bm{h})}\sum_{\eta=1}^{3}
\frac{\partial(-K-U)}{\partial h_{\mu\eta}}\,h_{\nu\eta}
\right]
\,\mathrm{d}\boldsymbol{s}\,\mathrm{d}\boldsymbol{p}^{\boldsymbol{s}},
\end{equation}
where $\mu,\nu\in\{1,2,3\}$. Eq.~\eqref{eq::FullyPress_re} is an ensemble
average, and thus implies $\widetilde{\bm{P}}=\langle \bm{P}_{\mathrm{ins}}\rangle$,
where the instantaneous pressure tensor is defined by
\begin{equation}
\label{eq::PresTensor_re}
\bm{P}_{\mathrm{ins}}
=\frac{1}{\det(\bm{h})}\frac{\partial(-K-U)}{\partial \bm{h}}\bm{h}^{\top}
:=\bm{P}_{\mathrm{ins}}^{K}+\bm{P}_{\mathrm{ins}}^{U}.
\end{equation}
The kinetic contribution is
\begin{equation}
\label{eq::K_U_Pins_re}
\bm{P}_{\mathrm{ins}}^{K}
=\frac{1}{\det(\bm{h})}\sum_{i=1}^{N}\frac{\bm{p}_i\otimes \bm{p}_i}{m_i},
\end{equation}
and the potential contribution can be expressed as
\begin{equation}
\label{eq::potential_re}
\begin{split}
\bm{P}_{\mathrm{ins}}^{U}
&=-\frac{1}{\det(\bm{h})}\frac{\partial U(\{\bm{h}\bm{s}_i\};\bm{h})}{\partial \bm{h}}\bm{h}^{\top}\\
&=\frac{1}{\det(\bm{h})}
\left[
\sum_{i=1}^{N}\bm{F}_i\otimes\bm{r}_i
-\left(\frac{\partial U(\{\bm{r}_i\};\bm{h})}{\partial \bm{h}}\right)\bm{h}^{\top}
\right],
\end{split}
\end{equation}
where $\bm{F}_i=-\nabla_{\bm{r}_i}U$ is the force and $\otimes$ denotes the outer
product. The second term in the last line of Eq.~\eqref{eq::potential_re} accounts
for any explicit dependence of the potential energy on the cell tensor (e.g., due
to the minimum-image convention or other aspects of the periodic geometry).

\subsection{Kernel decomposition for electrostatic interactions}
\label{sec::useries}

In NPT simulations, the main computational cost comes from repeatedly evaluating the
potential energy and its contribution to the instantaneous pressure tensor.
For charged systems, this step is particularly expensive because Coulomb
interactions decay slowly and must be summed over periodic images. Let $q_i$ denote
the charge of particle $i$. Under periodic boundary conditions specified by the cell
matrix $\bm{h}$, the Coulomb energy can be written as
\begin{equation}\label{eq::potential}
U_{\text{coul}}=\frac{1}{2}\sum_{i,j=1}^{N}\sum_{\bm{n}\in\mathbb{Z}^3}{}^{\prime}\frac{q_{i}q_{j}}{\left|\bm{r}_{ij}+\bm{h}\bm{n}\right|},
\end{equation}
where $\bm{r}_{ij}:=\bm{r}_i-\bm{r}_j$ and the prime indicates that the case of $i=j$ and $\bm{n}=\bm{0}$ is excluded in the double sum. We assume charge neutrality, i.e., $\sum_i q_i=0$. The lattice sum in
Eq.~\eqref{eq::potential} is only conditionally convergent, meaning that its value
depends on the summation order. As a result, a direct real-space truncation is not
reliable and can also introduce force discontinuities. 

To address this issue, the classical Ewald decomposition~\cite{Ewald1921AnnPhys} is frequently used, which splits the Coulomb kernel into two components:
\begin{equation}\label{eq::Ewald}
\frac{1}{r}=\frac{1}{r}\erfc\left(\frac{r}{\sqrt{2}\alpha}\right)+\frac{1}{r}\erf\left(\frac{r}{\sqrt{2}\alpha}\right),
\end{equation}
where $\alpha>0$ is a splitting factor. The first term decays rapidly and can
be truncated at a cutoff radius $r_c$ in real space, while the second term is smooth
and is evaluated efficiently in Fourier space. In practice, FFT-based solvers (e.g.,
PME/PPPM) accelerate the Fourier-space contribution, reducing the per-step cost
to $O(N\log N)$ for systems of size $N$. The same splitting strategy extends to pressure calculations. Applying
Eq.~\eqref{eq::potential_re} to $U_{\text{coul}}$ yields the Coulomb contribution to
the instantaneous pressure tensor
\begin{equation}\label{eq::Pcoul}
\bm{P}_{\text{coul}}
=\frac{1}{2\det(\bm{h})}
\sum_{\bm{n}\in\mathbb{Z}^3}{}^{\prime}\sum_{i,j=1}^{N}
q_iq_j\,
\frac{\left(\bm{r}_{ij}+\bm{h}\bm{n}\right)\otimes\left(\bm{r}_{ij}+\bm{h}\bm{n}\right)}
{\left|\bm{r}_{ij}+\bm{h}\bm{n}\right|^{3}},
\end{equation}
whose radial scaling is governed by the $1/r^3$ kernel. Ewald-type methods provide consistent real- and Fourier-space treatments for both the energy kernel $1/r$ and the pressure-related kernel $1/r^3$, while preserving the same asymptotic complexity for pressure evaluation~\cite{di2015stochastic,Marcello2016Press}. These ideas form
the basis of widely used implementations in MD packages such as
LAMMPS~\cite{thompson2022lammps} and GROMACS~\cite{abraham2015gromacs}.

Unlike Ewald splitting, the SOG decomposition is a
more recent approach that approximates $1/r^\beta$ by a finite sum of smooth Gaussians and uses this representation to construct a real-/Fourier-space split. For $\beta>0$, the inverse-power kernel admits the integral identity
\begin{equation}
\label{eq::BSAInt}
\frac{1}{r^\beta}
=\frac{1}{\Gamma(\beta/2)}
\int_{-\infty}^{\infty}
e^{-e^t r^2+\frac{\beta}{2}t}\,\mathrm{d}t,
\end{equation}
where $\Gamma(\cdot)$ is the Gamma function. Discretizing the integral by the uniform trapezoidal rule with nodes $t_\ell=t_0+\ell h$, starting point $t_0=-\log(2\sigma^2)$, and step size $h=2\log b$ (with $\sigma>0$ and $b>1$) yields the bilateral series approximation (BSA)~\cite{beylkin2005approximation,beylkin2010approximation}:
\begin{equation}
\label{eq::BSA_beta}
\frac{1}{r^{\beta}}
\approx
\frac{2 \log b}{(\sqrt{2}\sigma)^{\beta}\Gamma(\beta/2)}
\sum_{\ell=-\infty}^{\infty} b^{-\beta\ell}
\exp\!\left[-\frac{1}{2}\left(\frac{r}{b^{\ell} \sigma}\right)^2\right].
\end{equation}
As $b\rightarrow 1^{+}$, the BSA achieves an exponentially small relative error
\cite{DEShaw2020JCP}:
\begin{equation}
\label{eq::BSA_err}
\left|
1-
\frac{2 r^{\beta}\log b}{(\sqrt{2}\sigma)^{\beta}\Gamma(\beta/2)}
\sum_{\ell=-\infty}^{\infty} b^{-\beta\ell}
\exp\!\left[-\frac{1}{2}\left(\frac{r}{b^{\ell} \sigma}\right)^2\right]
\right|
\lesssim
\frac{2\sqrt{2\pi}}{\Gamma(\beta/2)}
\left(\frac{\beta^2}{4}+\frac{\pi^2}{(\log b)^2}\right)^{\frac{\beta-1}{4}}
e^{-\frac{\pi^2}{2\log b}},
\end{equation}
where the dominant exponential factor is independent of $\beta$. The so-called $u$-series method~\cite{DEShaw2020JCP} is obtained by specializing
Eq.~\eqref{eq::BSA_beta} to $\beta=1$ and truncating the Gaussian series at $0\leq \ell \leq M-1$ with $M$ the number of retained Gaussians. It splits the Coulomb kernel into short- and long-range components,
\begin{equation}
\label{eq::potential_split}
\frac{1}{r}=\mathcal{N}_b^\sigma(r)+\mathcal{F}_b^\sigma(r),
\end{equation}
where the long-range component collects Gaussians with larger bandwidths (slower
decay) and is evaluated in Fourier space,
\begin{equation}
\label{eq::long}
\mathcal{F}_b^\sigma(r)
=\sum_{\ell=0}^{M-1} w_\ell \exp\!\left(-\frac{r^2}{s_\ell^2}\right),
\end{equation}
with $w_\ell=(\pi/2)^{-1/2}\sigma^{-1}b^{-\ell}\log b$ and $s_\ell=\sqrt{2}b^{\ell}\sigma$. The complementary short-range term is
\begin{equation}
\label{eq::short}
\mathcal{N}_b^\sigma(r)=
\begin{cases}
\dfrac{1}{r}-\mathcal{F}_b^\sigma(r), & r<r_c,\\[1em]
0, & r\ge r_c,
\end{cases}
\end{equation}
which decays rapidly and can be evaluated efficiently in real space. A key advantage of this SOG decomposition is that the cutoff behavior can be tuned to enforce smoothness at $r_c$. We choose $r_c$ as the smallest root of
\begin{equation}\label{Eq::2.20}
\frac{1}{r}-\mathcal{F}^{\sigma}_b(r)=0,
\end{equation}
which enforces $C^0$ continuity. For stable MD simulations and accurate pressure statistics, it is also desirable to enforce $C^1$ continuity at the cutoff, i.e.,
continuity of the force:
\begin{equation}\label{eq::2.22}
\frac{\mathrm{d}}{\mathrm{d}r}\left[\frac{1}{r}-\mathcal{F}_{b}^{\sigma}(r)\right]\bigg|_{r=r_c}=0.
\end{equation}
To satisfy Eqs.~\eqref{Eq::2.20}--\eqref{eq::2.22} simultaneously, we introduce an
additional scaling parameter $\omega$ for the narrowest Gaussian,
\begin{equation}
w_0 \leftarrow \omega \left(\frac{\pi}{2}\right)^{-1/2}\sigma^{-1}\log b,
\end{equation}
and solve for $(r_c,\omega)$ given $(b,\sigma,M)$. This SOG-based splitting has been used for accurate electrostatic energy and force evaluation~\cite{DEShaw2020JCP,liang2023random,gao2024fast,chen2025random,chen2026random}, and has also been adopted in machine-learning interatomic potentials~\cite{ji2025machine,ji2026accurate} and in numerical solvers for Schr\"odinger-type equations~\cite{zhou2025sum}. For completeness, the corresponding energy and force decompositions induced by Eq.~\eqref{eq::potential_split} are summarized in~\ref{app::force_energy}.

\section{Random batch sum-of-Gaussians method for NPT ensemble}
\label{sec::RBSOG_NPT}

In this section, we extend the SOG framework to the pressure-related $1/r^3$ kernel and derive the corresponding pressure-tensor formulation. We then present the RBSOG method in detail, including random batch importance sampling and a measure-recalibration procedure to reduce sampling-induced variance. Together, these components yield an efficient framework for NPT-ensemble simulations.

\subsection{SOG decomposition for instantaneous pressure} 
\label{sec::Pressure}
We begin by setting $\beta=3$ in Eq.~\eqref{eq::BSA_beta}, which yields an SOG approximation of the pressure-related $1/r^3$ kernel:
\begin{equation}\label{eq::approxerror}
\frac{1}{r^3}\approx \frac{\sqrt{2}\,\log \widetilde{b}}{\sqrt{\pi}\,\widetilde{\sigma}^3}
\sum_{\ell=-\infty}^{\infty} \widetilde{b}^{-3\ell}
\exp\!\left[-\frac{1}{2}\left(\frac{r}{\widetilde{b}^{\ell}\widetilde{\sigma}}\right)^2\right],
\end{equation}
where $\widetilde{b}>1$ and $\widetilde{\sigma}>0$ are splitting parameters, analogous to $b$ and $\sigma$ in the potential splitting.  Eq.~\eqref{eq::BSA_err} implies that this approximation has uniform error bound. For notational convenience, we define
\[
\widetilde{w}_\ell=\left(\frac{\pi}{2}\right)^{-1/2}\frac{\log \widetilde{b}}{\widetilde{b}^{3\ell}\widetilde{\sigma}^3},
\qquad
\widetilde{s}_\ell=\sqrt{2}\,\widetilde{b}^{\ell}\widetilde{\sigma}.
\]
By truncating the Gaussian sum at $0\leq \ell\leq \widetilde{M}-1$, we split the kernel into two components:
\begin{equation}\label{eq::SOGdecompositionPress}
\frac{1}{r^3}=\widetilde{\mathcal{N}}_{b}^\sigma(r)+\widetilde{\mathcal{F}}_b^\sigma(r),
\end{equation}
where 
\begin{equation}
    \label{eq::pressure_split} \widetilde{\mathcal{F}}_b^\sigma(r)=\sum_{\ell=0}^{\widetilde{M}-1}\widetilde{w}_\ell e^{-r^2/\widetilde{s}_{\ell}^2}\,\quad \text{and}\quad\,\widetilde{\mathcal{N}}_b^\sigma(r)= \begin{cases}1 / r^3 -\widetilde{\mathcal{F}}_b^\sigma(r), & \text { if } r<r_c \\[1em] 0, & \text { if } r \geq r_c\end{cases}
\end{equation}
are long- and short-range contributions, respectively. 

By substituting the SOG decomposition in Eq.~\eqref{eq::SOGdecompositionPress} into Eq.~\eqref{eq::Pcoul}, we split the Coulomb contribution to the pressure tensor into two parts:
\begin{equation}\label{eq::SOGDecomp}
\bm{P}_{\text{coul}}:=\bm{P}_{\text{coul}}^{\mathcal{N}}+\bm{P}_{\text{coul}}^{\mathcal{F}},
\end{equation}
where
\begin{equation}
\label{eq::P_short}
\begin{aligned}
	\bm{P}_{\text{coul}}^{\mathcal{N}} &=\frac{1}{2\det(\bm{h})}\sum_{\bm{n}}\!^\prime\sum_{i,j}q_iq_j\widetilde{\mathcal{N}}_b^{\sigma}(|\bm{r}_{ij}+\bm{h}\bm{n}|)(\bm{r}_{ij}+\bm{h}\bm{n})\otimes (\bm{r}_{ij}+\bm{h}\bm{n})
\end{aligned}
\end{equation}
and
\begin{equation}\label{eq::PFcoul}
\bm{P}_{\text{coul}}^{\mathcal{F}} =\frac{1}{2\det(\bm{h})}\sum_{\bm{n}}\sum_{i,j}q_iq_j\widetilde{\mathcal{F}}_b^{\sigma}(|\bm{r}_{ij}+\bm{h}\bm{n}|)(\bm{r}_{ij}+\bm{h}\bm{n})\otimes (\bm{r}_{ij}+\bm{h}\bm{n})
\end{equation}
are the short-range and long-range parts, respectively. Because $\widetilde{\mathcal{N}}_{b}^{\sigma}$ decays rapidly, $\bm{P}_{\text{coul}}^{\mathcal{N}}$ is computed by real-space truncation. Since $\widetilde{\mathcal{F}}_{b}^{\sigma}(r)$ is smooth, $\bm{P}_{\text{coul}}^{\mathcal{F}}$ can be evaluated efficiently in the Fourier space. We define the Fourier transform pair as
\begin{equation}\label{eq::FourierTransform}
\hat{f}(\bm{k})=\int_{\bm{h}}f(\bm{r})e^{-i\bm{k}\cdot\bm{r}}\mathrm{d}\bm{r},\quad f(\bm{r})=\frac{1}{\det(\bm{h})}\sum_{\bm{k}} \hat{f}(\bm{k})e^{i\bm{k}\cdot \bm{r}},
\end{equation}
with $\bm{k}= 2\pi \bm{h}^{-\top}\bm{m}$, $\bm{m}\in\mathbb{Z}^3$, and $\bm{h}^{-\top}=(\bm{h}^{-1})^{\top}$. The corresponding $\bm{k}$-space form of $\bm{P}_{\text{coul}}^{\mathcal{F}}$ is given below. 

\begin{theorem}
Assume tinfoil boundary conditions. The Fourier space contribution to the Coulombic pressure tensor, $\bm{P}_{\mathrm{coul}}^{\mathcal{F}}$, admits the representation
\begin{equation}
\label{eq::P_long}
\bm{P}_{\mathrm{coul}}^{\mathcal{F}}
=\frac{\pi^{3/2}}{4\det(\bm{h})^2}
\sum_{\bm{k}\neq \bm{0}}
\sum_{\ell=0}^{\widetilde{M}-1}
\widetilde{w}_{\ell}\widetilde{s}_{\ell}^{5}
e^{-\widetilde{s}_{\ell}^2k^2/4}
\left(\bm{I}-\frac{\widetilde{s}_{\ell}^2}{2}\bm{k}\otimes\bm{k}\right)
\left|\rho(\bm{k})\right|^2,
\end{equation}
where
\begin{equation}
\label{eq::rho_k}
\rho(\bm{k})
=\sum_{i=1}^{N} q_i e^{i\bm{k}\cdot \bm{r}_i}
=\sum_{i=1}^{N} q_i e^{2\pi i\bm{m}\cdot \bm{s}_i}
\end{equation}
is the charge structure factor. In particular, $\rho(\bm{k})$ depends only on particle charges and fractional coordinates, and is independent of the cell tensor $\bm{h}$.
\end{theorem}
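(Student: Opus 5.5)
The plan is to apply Poisson summation to the lattice sum in Eq.~\eqref{eq::PFcoul}, with the Fourier transform of a Gaussian times a quadratic tensor computed by differentiation in $\bm{k}$. First observe that, unlike Eq.~\eqref{eq::P_short}, the sum in Eq.~\eqref{eq::PFcoul} carries no prime. This is harmless: for $i=j$ and $\bm{n}=\bm{0}$ the summand is $\widetilde{\mathcal{F}}_b^\sigma(0)\,\bm{0}\otimes\bm{0}=\bm{0}$, because $\widetilde{\mathcal{F}}_b^\sigma$ is bounded at the origin. Hence no self-interaction correction arises, in contrast with the energy case. Define the tensor-valued function
\[
\bm{G}(\bm{r})=\widetilde{\mathcal{F}}_b^\sigma(|\bm{r}|)\,\bm{r}\otimes\bm{r}=\sum_{\ell=0}^{\widetilde{M}-1}\widetilde{w}_\ell\, e^{-|\bm{r}|^2/\widetilde{s}_\ell^2}\,\bm{r}\otimes\bm{r}.
\]
This function is Schwartz, so the periodized sum $\sum_{\bm{n}}\bm{G}(\bm{r}+\bm{h}\bm{n})$ converges absolutely. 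By Poisson summation with the convention \eqref{eq::FourierTransform}, it equals $\det(\bm{h})^{-1}\sum_{\bm{k}}\hat{\bm{G}}(\bm{k})e^{i\bm{k}\cdot\bm{r}}$, where $\bm{k}=2\pi\bm{h}^{-\top}\bm{m}$ and $\hat{\bm{G}}$ is the free-space transform.

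Next, compute $\hat{\bm{G}}$ one Gaussian at a time. The standard transform is $\widehat{e^{-r^2/s^2}}(\bm{k})=\pi^{3/2}s^3e^{-s^2k^2/4}$. Multiplying by $r_\mu r_\nu$ in real space corresponds to applying $-\partial_{k_\mu}\partial_{k_\nu}$ in Fourier space. Differentiating twice gives
\[
\nabla_{\bm{k}}\otimes\nabla_{\bm{k}}\,e^{-s^2k^2/4}=\Bigl(-\tfrac{s^2}{2}\bm{I}+\tfrac{s^4}{4}\bm{k}\otimes\bm{k}\Bigr)e^{-s^2k^2/4}.
\]
Therefore the transform of $e^{-r^2/s^2}\,\bm{r}\otimes\bm{r}$ is $\tfrac{\pi^{3/2}s^5}{2}\bigl(\bm{I}-\tfrac{s^2}{2}\bm{k}\otimes\bm{k}\bigr)e^{-s^2k^2/4}$.

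Then substitute back into Eq.~\eqref{eq::PFcoul} with $\bm{r}=\bm{r}_{ij}$ and exchange the finite sums over $i,j,\ell$ with the absolutely convergent $\bm{k}$-sum. The double sum over particles factorizes:
\[
\sum_{i,j}q_iq_je^{i\bm{k}\cdot(\bm{r}_i-\bm{r}_j)}=|\rho(\bm{k})|^2.
\]
Collecting the prefactors $\tfrac{1}{2\det\bm{h}}\cdot\tfrac{1}{\det\bm{h}}\cdot\tfrac{\pi^{3/2}}{2}$ gives the constant $\pi^{3/2}/(4\det(\bm{h})^2)$ in Eq.~\eqref{eq::P_long}.

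The step requiring care is the $\bm{k}=\bm{0}$ term. Its coefficient $\hat{\bm{G}}(\bm{0})=\sum_\ell\widetilde{w}_\ell\tfrac{\pi^{3/2}\widetilde{s}_\ell^5}{2}\bm{I}$ is finite, and it multiplies $|\rho(\bm{0})|^2=(\sum_iq_i)^2=0$ by charge neutrality, so the term vanishes. For the genuine Coulomb sum, the conditionally convergent surface (dipole) term is precisely what the tinfoil assumption removes; this is the justification for restricting to $\bm{k}\neq\bm{0}$. Finally, the identity $\bm{k}\cdot\bm{r}_i=2\pi\bm{m}^\top\bm{h}^{-1}\bm{h}\bm{s}_i=2\pi\bm{m}\cdot\bm{s}_i$ gives the second form of \eqref{eq::rho_k}. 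This form shows that $\rho$ depends only on the charges and the fractional coordinates, with no dependence on $\bm{h}$.
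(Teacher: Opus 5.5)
Your proof is correct and follows the paper's route. You apply Poisson summation to $\widetilde{\mathcal{F}}_b^\sigma(|\bm r|)\,\bm r\otimes\bm r$, compute the Fourier transform termwise (you make this explicit through $-\partial_{k_\mu}\partial_{k_\nu}$ of the Gaussian transform, where the paper just says ``component-wise''), factor the particle double sum into $|\rho(\bm k)|^2$, and drop the $\bm k=\bm 0$ term. Your two extra checks fill in details the paper leaves implicit: the missing prime in Eq.~\eqref{eq::PFcoul} is harmless because the $i=j$, $\bm n=\bm 0$ summand is zero, and the $\bm k=\bm 0$ term already vanishes by charge neutrality.
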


\begin{proof}
Define the tensor-valued function
\begin{equation}
\bm{g}_{b}^{\sigma}(\bm{r})
:=\widetilde{\mathcal{F}}_{b}^{\sigma}(|\bm{r}|)
\,\bm{r}\otimes\bm{r}.
\end{equation}
Applying the Poisson summation formula to Eq.~\eqref{eq::PFcoul} yields
\begin{equation}\label{eq::PcoulF}
\bm{P}_{\text{coul}}^{\mathcal{F}}
=\frac{1}{2\det(\bm{h})^2}
\sum_{\bm{k}}
\widehat{\bm{g}}_{b}^{\sigma}(\bm{k})
\sum_{i,j} q_i q_j\, e^{i\bm{k}\cdot\bm{r}_{ij}},
\end{equation}
where $\widehat{\bm{g}}_{b}^{\sigma}(\bm{k})$ denotes the Fourier transform of $\bm{g}_{b}^{\sigma}(\bm{r})$.
Using the Fourier transform convention in Eq.~\eqref{eq::FourierTransform} and evaluating component-wise, we obtain
\begin{equation}\label{eq::gbsigma}
\widehat{\bm{g}}_{b}^{\sigma}(\bm{k})
=\sum_{\ell=0}^{\widetilde{M}-1}
\frac{1}{2}\pi^{3/2}\widetilde{w}_\ell \widetilde{s}_\ell^{5}
e^{-\widetilde{s}_\ell^2k^2/4}
\left(\bm{I}-\frac{\widetilde{s}_\ell^2}{2}\bm{k}\otimes\bm{k}\right).
\end{equation}
Substituting Eq.~\eqref{eq::gbsigma} into Eq.~\eqref{eq::PcoulF} and removing the $\bm{k}=\bm{0}$ contribution (due to tinfoil boundary conditions) completes the proof.
\end{proof}

Smooth pressure decomposition is important for stable NPT simulations. Although Eq.~\eqref{eq::SOGdecompositionPress} is analytic, truncation at the cutoff $r_c$ can still introduce discontinuity and unknown artifacts. Ewald decomposition suffers from a similar issue, and previous studies~\cite{kim2023neighbor,kubincova2020reaction} have shown that cutoff artifacts can significantly distort volume fluctuations. In the SOG decomposition, this issue can be mitigated by parameter calibration. Specifically, we rescale the narrowest Gaussian component in \(\widetilde{\mathcal{F}}_{b}^{\sigma}(r)\) as
\begin{equation}\label{omega0}
\widetilde{w}_{0}\;\rightarrow\; \widetilde{\omega}\widetilde{w}_{0}=\widetilde{\omega}\left(\frac{\pi}{2}\right)^{-1/2}\widetilde{\sigma}^{-3}\log \widetilde{b},
\end{equation}
and choose $\widetilde{\omega}$ to enforce $C^{0}$ continuity at $r_c$, i.e.,
\begin{equation}\label{eq::continuity}
\frac{1}{r_{c}^3}-\widetilde{\mathcal{F}}_{b}^{\sigma}(r_c)=0.
\end{equation}
If higher-order smoothness is needed, one may further tune the bandwidth of the narrowest Gaussian to satisfy $C^{1}$ continuity (and analogously for higher derivatives). In our simulations, $C^{0}$ continuity is sufficient to maintain stable cell-volume fluctuations.

Another key requirement is consistency with the virial theorem~\cite{Frenkel2001Understanding}.For Coulomb interactions, the electrostatic energy is homogeneous under uniform scaling: for any $a>0$,
\[
U(\{a\bm{r}_{i}\};a\bm{h})=\frac{1}{a}\,U(\{\bm{r}_{i}\};\bm{h}).
\]
Therefore,
$\mathrm{tr}\!\left(\bm{P}_{\mathrm{coul}}\right)=U_{\text{coul}}/\det(\bm{h})$.
Within the SOG framework, this identity is preserved when the potential and pressure decompositions are truncated consistently, so the resulting pressure remains unbiased. The formal statement is given in Theorem~\ref{thm::virial}; a detailed proof is provided in~\ref{app::virialtheorem}.

\begin{theorem}
\label{thm::virial}
Suppose the SOG decompositions of the potential and pressure in
Eqs.~\eqref{eq::couldecomp} and \eqref{eq::SOGDecomp} are truncated using the same number of terms, i.e., $M=\widetilde{M}$, and the same splitting parameters $b=\widetilde{b}$ and $\sigma=\widetilde{\sigma}$ are used in both decompositions. Then the virial identity
\begin{equation}
\label{eq::Virli_flexible}
\mathrm{tr}\!\left(\bm{P}^{\mathcal{N}}_{\mathrm{coul}}+\bm{P}^{\mathcal{F}}_{\mathrm{coul}}\right)
=\frac{U_{\mathrm{coul}}^\mathcal{N}+U_{\mathrm{coul}}^\mathcal{F}-U_{\mathrm{self}}}{\det(\bm{h})}
\end{equation}
is preserved by the truncated SOG formulation. Moreover, under this consistent truncation, the matching of the narrowest-Gaussian weights implies $\omega=\widetilde{\omega}$, so it enforces continuity of both the force and pressure with the same set of splitting parameters. 
\end{theorem}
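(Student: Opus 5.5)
The plan is to show first that, under the matching $M=\widetilde M$, $b=\widetilde b$, $\sigma=\widetilde\sigma$, the pressure split of Eq.~\eqref{eq::SOGDecomp} is exactly the strain derivative (virial) of the potential split of Eq.~\eqref{eq::potential_split}, term by term. The trace identity \eqref{eq::Virli_flexible} would then follow from a scaling argument applied to the energy.

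\textbf{Step 1: match the Gaussian weights.} For a radial pair potential $\phi(r)$, the contribution of a pair separation $\bm r$ to $-\partial U/\partial\bm h\,\bm h^\top$ is $-\phi'(r)r^{-1}\,\bm r\otimes\bm r$. For $\phi=1/r$ this gives the $1/r^3$ kernel of Eq.~\eqref{eq::Pcoul}. For a single Gaussian $w_\ell e^{-r^2/s_\ell^2}$ it gives $(2w_\ell/s_\ell^2)e^{-r^2/s_\ell^2}$. A direct computation with $w_\ell=(\pi/2)^{-1/2}\sigma^{-1}b^{-\ell}\log b$ and $s_\ell^2=2b^{2\ell}\sigma^2$ yields
\[
\frac{2w_\ell}{s_\ell^2}=\Big(\frac{\pi}{2}\Big)^{-1/2}\frac{\log b}{b^{3\ell}\sigma^3}=\widetilde w_\ell,
\qquad s_\ell=\widetilde s_\ell .
\]
Hence $-\mathcal F_b^{\sigma\,\prime}(r)/r=\widetilde{\mathcal F}_b^\sigma(r)$ identically, and likewise $-(\mathcal N_b^{\sigma})'(r)/r=\widetilde{\mathcal N}_b^\sigma(r)$ for $r<r_c$. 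Since the correspondence is linear in each weight, rescaling $w_0\to\omega w_0$ is equivalent to rescaling $\widetilde w_0\to\omega\widetilde w_0$. So matching the narrowest weights forces $\widetilde\omega=\omega$. Moreover, because $\widetilde{\mathcal N}$ is the derivative kernel of $\mathcal N$, the $C^1$ condition \eqref{eq::2.22} for the potential is precisely the $C^0$ condition \eqref{eq::continuity} for the pressure. This settles the second claim of the theorem.

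\textbf{Step 2: identify the pressure parts as virials of the energy parts.} For the long-range part I will work in Fourier space, using the preceding theorem together with the energy formula in the appendix. Since $\rho(\bm k)$ depends only on fractional coordinates, the only $\bm h$-dependence of $U^{\mathcal F}_{\rm coul}$ is through $\det(\bm h)^{-1}$ and through $\bm k=2\pi\bm h^{-\top}\bm m$. Differentiating the Gaussian transforms $\pi^{3/2}w_\ell s_\ell^3e^{-s_\ell^2k^2/4}$ with respect to $\bm h$ should reproduce Eq.~\eqref{eq::P_long} with $\widetilde w_\ell\widetilde s_\ell^5=2w_\ell s_\ell^3$. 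This serves as a consistency check on Step 1. The self term is $\bm h$-independent and contributes no pressure. For the short-range part, I will use the real-space pair form. The boundary term produced by moving pairs across $r_c$ vanishes because $\mathcal N_b^\sigma(r_c)=0$ by \eqref{Eq::2.20}.

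\textbf{Step 3: take the trace via a scaling argument.} Set $\bm h\to a\bm h$ with the $\bm s_i$ fixed, so that $\mathrm{tr}\big(\bm P^{\mathcal N}_{\rm coul}+\bm P^{\mathcal F}_{\rm coul}\big)=-\det(\bm h)^{-1}\,\tfrac{\mathrm d}{\mathrm da}U(a)\big|_{a=1}$. The goal is then to show
\[
-\frac{\mathrm d}{\mathrm da}\Big(U^{\mathcal N}_{\rm coul}+U^{\mathcal F}_{\rm coul}-U_{\rm self}\Big)\Big|_{a=1}=U^{\mathcal N}_{\rm coul}+U^{\mathcal F}_{\rm coul}-U_{\rm self},
\]
that is, Euler homogeneity of degree $-1$ for the split energy.

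\textbf{Main obstacle.} This last step is where I expect the difficulty. Individually, neither $\mathcal N$ nor $\mathcal F$ is homogeneous: per pair, $-r\,\partial_r\big(w e^{-r^2/s^2}\big)\neq w e^{-r^2/s^2}$. My first attempt will use that the combined kernel $\mathcal N_b^\sigma+\mathcal F_b^\sigma$ equals $1/r$ exactly for $r<r_c$. For $r\ge r_c$ it equals $\mathcal F_b^\sigma$, which agrees with $1/r$ up to the BSA error \eqref{eq::BSA_err} plus the truncation of the $\ell\ge M$ tail. Thus the per-pair Euler defect is $r\,\partial_r(\mathcal F-1/r)+(\mathcal F-1/r)$ on $r\ge r_c$. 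I will show that the lattice sum of this defect either cancels via the $\bm k$-space self and neutral-background terms under tinfoil conditions, or is of the order of the SOG decomposition error. If it is only the latter, I will state \eqref{eq::Virli_flexible} as holding for the SOG-represented Coulomb interaction, with the point being that the matched truncation introduces no additional inconsistency between pressure and energy.
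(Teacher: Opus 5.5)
Your Step 1 is the paper's own argument for the second claim (cf.\ Eq.~\eqref{eq::correlation_kernels}). With $2w_\ell/s_\ell^2=\widetilde w_\ell$ and $s_\ell=\widetilde s_\ell$, the $C^1$ condition on $1/r-\mathcal F_b^\sigma$ at $r_c$ becomes the $C^0$ condition on $1/r^3-\widetilde{\mathcal F}_b^\sigma$. That condition is linear in the narrowest weight, so it forces $\widetilde\omega=\omega$.

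For the trace identity you take a different route. The paper does direct bookkeeping. It writes $\mathrm{tr}\,\bm P^{\mathcal F}_{\mathrm{coul}}-U^{\mathcal F}_{\mathrm{coul}}/\det(\bm h)$ in $\bm k$-space, and writes $U^{\mathcal N}_{\mathrm{coul}}/\det(\bm h)-\mathrm{tr}\,\bm P^{\mathcal N}_{\mathrm{coul}}$ as a real-space lattice sum of $f_1+f_2=\sum_\ell\widetilde w_\ell e^{-r^2/\widetilde s_\ell^2}(r^2-\widetilde s_\ell^2/2)$. (The $+\widetilde s_\ell^2/2$ printed in Eq.~\eqref{eq::UN} is a sign slip, as its $f_2$ shows.) It then converts this sum with Eq.~\eqref{eq::Poisson}: $\hat f_1+\hat f_2$ cancels the $\bm k$-space difference, and $-\sum_i q_i^2(f_1+f_2)(\bm 0)=\sum_\ell w_\ell\sum_i q_i^2=2U_{\mathrm{self}}$ produces the self-energy.

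Your Step 2 proves something stronger, at the tensor level:
$\bm P^{\mathcal N}_{\mathrm{coul}}+\bm P^{\mathcal F}_{\mathrm{coul}}=-\det(\bm h)^{-1}\,\partial_{\bm h}\bigl(U^{\mathcal N}_{\mathrm{coul}}+U^{\mathcal F}_{\mathrm{coul}}-U_{\mathrm{self}}\bigr)\bm h^{\top}$.
This holds exactly even with the cutoff, because $\mathcal N_b^\sigma(r_c)=0$. It reduces the trace identity to Euler homogeneity, which shows precisely where exactness can fail.

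That failure point is real, and it is exactly the step the paper passes over. Eq.~\eqref{eq::UN} is summed over all images $\bm n$, even though $\mathcal N_b^\sigma$ and $\widetilde{\mathcal N}_b^\sigma$ vanish for $r\ge r_c$. If you honor the cutoff, the same computation gives
\[
\mathrm{tr}\bigl(\bm P^{\mathcal N}_{\mathrm{coul}}+\bm P^{\mathcal F}_{\mathrm{coul}}\bigr)-\frac{U^{\mathcal N}_{\mathrm{coul}}+U^{\mathcal F}_{\mathrm{coul}}-U_{\mathrm{self}}}{\det(\bm h)}=\frac{1}{2\det(\bm h)}\sum_{\bm n}{}^{\prime}\sum_{i,j}q_iq_j\,(f_1+f_2)(\bm r_{ij}+\bm h\bm n)\,H(|\bm r_{ij}+\bm h\bm n|-r_c).
\]
Since $f_1+f_2=-(r\mathcal F_b^\sigma)'$, this is exactly your Euler defect.

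Your option (a) cannot work. The $\bm k=\bm 0$ term is killed by neutrality, and the self term is already used up in $U^{\mathcal F}_{\mathrm{coul}}-U_{\mathrm{self}}=\frac12\sum'_{\bm n}\sum_{i,j}q_iq_j\mathcal F_b^\sigma(|\bm r_{ij}+\bm h\bm n|)$. Nothing is left to cancel a remainder that is not identically zero. So neither your route nor the paper's gives the identity exactly with the cutoff in place.

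Complete the argument with option (b), stated precisely. The identity is exact when the short-range difference is summed over all images, which is what the paper's computation actually proves. In the cutoff formulation it holds up to the remainder above.

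To quantify the remainder, set $e:=1/r-\mathcal F_b^\sigma$. Then $f_1+f_2=e+re'=e-r^2K$, where, under matched parameters, $K=-e'/r$ is exactly the pressure error kernel of Eq.~\eqref{eq::ErrKernel}. So the remainder is the energy error kernel minus the trace of $K(r)\,\bm r\otimes\bm r$, summed beyond $r_c$. It is therefore of the size of the decomposition errors (cf.\ Theorem~\ref{thm::PressError}), not a new inconsistency introduced by the matched truncation. Moreover, $(f_1+f_2)(r_c)=0$ by the matched $C^0$/$C^1$ conditions, so the remainder stays continuous as pairs cross the cutoff.
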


Enforcing $M=\widetilde{M}$, $b=\widetilde{b}$, and $\sigma=\widetilde{\sigma}$ is not always a free lunch. Table~\ref{table::para_error} shows that energy, force, and pressure can require different numbers of Gaussian terms at the same tolerance. In practice, for force/pressure tolerances around $10^{-3}\sim10^{-4}$, the required term counts are close. At higher precision, one may need to balance strict virial consistency against computational efficiency.

Let us study the complexity of evaluating $\bm{P}^{\mathcal{N}}_{\mathrm{coul}}$ and $\bm{P}^{\mathcal{F}}_{\mathrm{coul}}$. The short-range cost is $O(r_c^3N)$ by truncation. For direct Fourier-space summation in Eq.~\eqref{eq::P_long}, truncating to $|\bm{k}|\le k_c$ gives $k_c = O(1/\widetilde{s}_0)=O(1/r_c)$
from Eq.~\eqref{eq::continuity}. The number of Fourier modes scales as
$O(\det(\bm{h})\,k_c^3)$. At fixed density, $\det(\bm{h})=O(N)$, so the long-range cost is
$O(N^2/r_c^3)$. The total cost is
$O(N r_c^3 + N^2/r_c^3)$, minimized at $r_c=O(N^{1/6})$ with complexity
$O(N^{3/2})$. If FFT-based methods are used, one takes $r_c=O(1)$ and the complexity can be reduced to $O(N\log N)$,
but it requires substantial global communication (3D FFT transposes and reductions over $O(N)$ Fourier modes),
which often limits parallel scalability~\cite{ayala2021scalability}.

\subsection{Random batch sum-of-Gaussians method with measure recalibration} 
\label{sec::Sampling}
We now present a stochastic algorithm for the long-range pressure contribution. It replaces full Fourier summation with random batch importance sampling and further reduces variance by measure recalibration. The resulting RBSOG method approximates $\bm{P}_{\text{coul}}^{\mathcal{F}}$ with $O(N)$ computational cost and $O(1)$ communication cost per evaluation.

We begin by briefly introducing the basic idea of random batch. For a lattice sum $\mu=\sum_{\bm{k}} f(\bm{k})$ on the Fourier lattice
$\bm{k}\in2\pi\bm{h}^{-\top}\mathbb{Z}^3$, we choose a proposal distribution $h(\bm{k})$ and rewrite
\begin{equation}\label{eq::randombatch}
\mu=\sum_{\bm{k}} \frac{f(\bm{k})}{h(\bm{k})}\,h(\bm{k})
=\mathbb{E}_{\bm{k}\sim h(\bm{k})}\!\left[\frac{f(\bm{k})}{h(\bm{k})}\right],
\end{equation}
where $\mathbb{E}_{\bm{k} \sim h(\boldsymbol{k})}$ denotes expectation over $\bm{k}\sim h(\bm{k})$. Instead of evaluating the full sum, we estimate the expectation using a mini-batch of Fourier modes with batch size $P$, yielding an unbiased approximation of $\mu$. The efficiency depends on the importance weights $f(\bm{k})/h(\bm{k})$: choosing $h(\bm{k})$ to match the behavior of $|f(\bm{k})|$ typically reduces variance and allows a smaller batch size. It is worth noting that the random mini-batch strategy was firstly introduced in interacting particle systems by Jin et al.~\cite{jin2020random}, and have been studied in various areas including nonconvex optimization, Monte Carlo simulations, and quantum simulations~\cite{li2020random,cai2024convergence,carrillo2021consensus,ye2021efficient}.

However, a direct application of Eq.~\eqref{eq::randombatch} to Eq.~\eqref{eq::P_long} is inefficient because radially symmetric and non-radially symmetric contributions have different optimal proposals. To clarify this point, we split  $\bm{P}_{\mathrm{coul}}^{\mathcal{F}}=\bm{P}_{\mathrm{coul}}^{\mathcal{F},\mathrm{r}}+\bm{P}_{\mathrm{coul}}^{\mathcal{F},\mathrm{nr}}$, where the radial and non-radial parts are given by
\begin{equation}\label{eq::PcoulFr}
\bm{P}_{\text{coul}}^{\mathcal{F},\mathrm{r}}
:=\frac{1}{4\det(\bm{h})^2}\sum_{\bm{k}\neq \bm{0}}\sum_{\ell=0}^{\widetilde{M}-1}
\pi^{3/2}\widetilde{w}_\ell \widetilde{s}_\ell^5
e^{-\widetilde{s}_\ell^2 k^2/4}\,|\rho(\bm{k})|^2 \,\bm{I}
\end{equation}
and
\begin{equation}\label{eq::PcoulFnr}
\bm{P}_{\text{coul}}^{\mathcal{F},\mathrm{nr}}
:=- \frac{1}{8\det(\bm{h})^2}\sum_{\bm{k}\neq \bm{0}}\sum_{\ell=0}^{\widetilde{M}-1}
\pi^{3/2}\widetilde{w}_\ell \widetilde{s}_\ell^7
e^{-\widetilde{s}_\ell^2 k^2/4}\,|\rho(\bm{k})|^2 \,\bm{k}\otimes\bm{k},
\end{equation}
respectively. To approximate the radial part Eq.~\eqref{eq::PcoulFr}, a natural choice of proposal distribution is 
\begin{equation}\label{eq::samplingP}
\mathscr{P}^{\mathrm{r}}(\bm{k}) =\frac{1}{S^{\mathrm{r}}}\sum_{\ell=0}^{\widetilde{M}-1} \pi^{3/2}\widetilde{w}_\ell \widetilde{s}_\ell^5\,k^2\, e^{-\widetilde{s}_\ell^2 k^2/4},
\end{equation}
where 
\begin{equation}
\label{eq::Sr}
S^{\text{r}}:=\sum_{\ell=0}^{\widetilde{M}-1}H_{\ell}^{\text{r}} \quad \text{with}\quad H_{\ell}^{\text{r}}:=\sum_{\bm{m}\in\mathbb{Z}^3}\pi^{3/2}\widetilde{w}_\ell \widetilde{s}_\ell^5|2\pi\bm{h}^{-\top}\bm{m}|^2e^{-\widetilde{s}_\ell^2 |2\pi\bm{h}^{-\top}\bm{m}|^2/4}
\end{equation}
is the normalization constant. Drawing $P^{\text{r}}$ Fourier modes $\{\bm{k}_{p}^{\text{r}}\}_{p=1}^{P^{\text{r}}}$ from $ \mathscr{P}^{\mathrm{r}}(\bm{k})$, Eq.~\eqref{eq::PcoulFr} can be approximated by
\begin{equation}\label{eq::P_approx_r}
\bm{P}_{\text{coul}}^{\mathcal{F},\mathrm{r}}
\approx
\bm{P}_{\text{coul}}^{\mathcal{F},\mathrm{r}*}
:=\frac{S^{\mathrm{r}}}{4\det(\bm{h})^2P^{\mathrm{r}}}
\sum_{p=1}^{P^{\mathrm{r}}}\frac{|\rho(\bm{k}_p^{\text{r}})|^2}{|\bm{k}_p^{\text{r}}|^2}\,\bm{I}.
\end{equation}
This choice follows from low-$k$ behavior: by charge neutrality, $|\rho(\bm{k})|^2=O(k^2)$ as $k\to 0$, so $|\rho(\bm{k})|^2/|\bm{k}|^2$ varies slowly for long-wave modes. Since low-$k$ modes dominate long-range fluctuations, Eq.~\eqref{eq::P_approx_r} has relatively small variance. For the non-radial term Eq.~\eqref{eq::PcoulFnr}, the analogous low-$\bm{k}$-matched proposal is
\begin{equation}\label{eq::samplingPnr}
\mathscr{P}^{\text{nr}}(\bm{k})=\frac{1}{S^{\text{nr}}}\sum_{\ell=0}^{\widetilde{M}-1}\pi^{3/2}\widetilde{w}_\ell \widetilde{s}_\ell^7 k^4 e^{-\widetilde{s}_\ell^2 k^2/4},
\end{equation}
which differs from $\mathscr{P}^{\mathrm{r}}(\bm{k})$ by an extra factor $\widetilde{s}_\ell^2 k^2$. The normalization factor is
\begin{equation}
\label{eq::Snr}
S^{\text{nr}}:=\sum_{\ell=0}^{\widetilde{M}-1}H_{\ell}^{\text{nr}}, \quad \text{with}\quad H_{\ell}^{\text{nr}}:=\sum_{\bm{m}\in\mathbb{Z}^3}\pi^{3/2}\widetilde{w}_\ell \widetilde{s}_\ell^7|2\pi\bm{h}^{-\top}\bm{m}|^4e^{-\widetilde{s}_\ell^2 |2\pi\bm{h}^{-\top}\bm{m}|^2/4}.
\end{equation}
Drawing $P^{\text{nr}}$ samples $\{\bm{k}_{p}^{\text{nr}}\}_{p=1}^{P^{\text{nr}}}$ from $\mathscr{P}^{\text{nr}}$,  we obtain the estimator
\begin{equation}\label{eq::P_approx_nr}
\bm{P}_{\text{coul}}^{\mathcal{F},\text{nr}}\approx \bm{P}_{\text{coul}}^{\mathcal{F},\text{nr}*}:=-\frac{1}{8\det(\bm{h})^2}\frac{S^{\text{nr}}}{P^{\text{nr}}}\sum_{p=1}^{P^{\text{nr}}}\frac{|\rho(\bm{k}_{p}^{\text{nr}})|^2\bm{k}_{p}^{\text{nr}}\otimes \bm{k}_{p}^{\text{nr}}}{|\bm{k}_{p}^{\text{nr}}|^4}.
\end{equation}

In Eqs.~\eqref{eq::P_approx_r} and \eqref{eq::P_approx_nr}, the computational bottleneck is the evaluation of \(\rho(\bm{k})\), which incurs \(O(N)\) work per mode and requires a global reduction in parallel implementations. If the radial and non-radial contributions are evaluated independently, one needs \(P^{\mathrm r}+P^{\mathrm{nr}}\) mode evaluations per step. Alternatively, adopting a shared proposal (e.g., \(\mathscr{P}^{\mathrm r}(\bm{k})\)) reduces the number of mode evaluations to \(P^{\mathrm r}\). However, the resulting importance weights for the non-radial contribution become highly unbalanced, leading to a noticeable variance inflation. A related effect has been reported for Ewald-based RBE methods~\cite{liang2022superscalability,liang2022iso}, where relatively large batch sizes are required for stable NPT simulations.

To address this issue, we propose a measure-recalibration strategy: radial and non-radial components are treated separately, but only $P^{\mathrm{r}}$ evaluations of $\rho(\bm{k})$ are needed per step. We first draw $\{\bm{k}_{p}^{\mathrm r}\}_{p=1}^{P^{\mathrm r}}$ from $\mathscr{P}^{\mathrm r}(\bm{k})$ using the Metropolis--Hastings (MH) algorithm~\cite{hastings1970monte} (see~\ref{app::MHsamp}), then reuse these samples as proposals in a second MH chain with the target distribution $\mathscr{Q}(\bm{k})=\mathscr{P}^{\mathrm{nr}}(\bm{k})$. At step $t$ of the second MH chain, we propose $\bm{k}^*=\bm{k}_{t}^{\mathrm r}$ with proposal distribution $q(\bm{k}^*|\bm{k}_t)=\mathscr{P}^{\mathrm r}(\bm{k}^*)$, where $\bm{k}_t$ is the current state. The acceptance probability is
\begin{equation}
\text{Acc}(\bm{k}^*,\bm{k}_{t}):=\min\left\{1,\frac{\mathscr{Q}(\bm{k}^*)q(\bm{k}_t|\bm{k}^*)}{\mathscr{Q}(\bm{k}_t)q(\bm{k}^*|\bm{k}_{t})}\right\}=\min\left\{1,\frac{\mathscr{P}^{\mathrm{nr}}(\bm{k}^*)\mathscr{P}^{\mathrm{r}}(\bm{k}_{t})}{\mathscr{P}^{\mathrm{nr}}(\bm{k}_{t})\mathscr{P}^{\mathrm{r}}(\bm{k}^*)}\right\}.
\end{equation}
If accepted, we set $\bm{k}_{t+1}=\bm{k}^*$; otherwise, we keep $\bm{k}_{t+1}=\bm{k}_t$. Because $\mathscr{P}^{\mathrm{nr}}$ and $\mathscr{P}^{\mathrm r}$ are close, acceptance is high (above $75\%$ in our tests). Note that the proposal probability $q(\bm{k}^*|\bm{k}_{t})$ used here does not depend on the current state $\bm{k}_t$. The resulting states $\{\bm{k}_{p}^{\mathrm{nr}}\}_{p=1}^{P^{\mathrm r}}$ are  used in Eq.~\eqref{eq::P_approx_nr}. The number of \emph{distinct} modes in $\{\bm{k}_{p}^{\mathrm r}\}\cup\{\bm{k}_{p}^{\mathrm{nr}}\}$ is at most $P^{\mathrm r}$, so at most $P^{\mathrm r}$ structure-factor evaluations are required per step. For convenience, we denote $P^{\text{r}}=P^{\text{nr}}=P$ in the remainder of this work. 

The normalization constants $S^{\mathrm r}$ and $S^{\mathrm{nr}}$ are lattice sums over $\mathbb{Z}^3$. They can be computed by truncated direct sums or via alternative representations derived from the Poisson summation formula (see Proposition~\ref{prop::Poisson}). The choice between Eq.~\eqref{eq::Sr} and Eq.~\eqref{eq::H2} (and likewise between Eq.~\eqref{eq::Snr} and Eq.~\eqref{eq::H2r}) is guided by whether \(\|\bm{h}\|/s_{\ell}\) or \(s_{\ell}\|\bm{h}^{-\top}\|/2\) exceeds $1$. In practice, truncation to $|\bm{m}|\leq \min \{\widetilde{s}_{\ell}\sqrt{\log(1/\varepsilon)}/\|\bm{h}\|,\,2\sqrt{\log(1/\varepsilon)}\|\bm{h}\|/\widetilde{s}_{\ell}\}$ is sufficient.
\begin{proposition}\label{prop::Poisson} 
By applying the Poisson summation formula to Eq.~\eqref{eq::Sr} and Eq.~\eqref{eq::Snr}, we have \begin{equation}\label{eq::H2} H_{\ell}^{\mathrm{r}}=\det(\bm{h})\widetilde{w}_{\ell}\left[6A_{0,\ell}-\frac{4}{\widetilde{s}_{\ell}^2}A_{2,\ell}\right] \end{equation} 
and
\begin{equation}\label{eq::H2r} H_{\ell}^{\mathrm{nr}}=\det(\bm{h}) \widetilde{w}_{\ell}\left[60A_{0,\ell}-\frac{80}{\widetilde{s}_{\ell}^2}A_{2,\ell}+\frac{16}{\widetilde{s}_{\ell}^4}A_{4,\ell}\right], \end{equation} where \begin{equation} A_{0,\ell}:=\sum_{\bm{m}\in\mathbb{Z}^3}e^{-\frac{|\bm{h}\bm{m}|^2}{\widetilde{s}_{\ell}^2}},\quad A_{2,\ell}:=\sum_{\bm{m}\in\mathbb{Z}^3}|\bm{h}\bm{m}|^2e^{-\frac{|\bm{h}\bm{m}|^2}{\widetilde{s}_{\ell}^2}},\quad A_{4,\ell}:=\sum_{\bm{m}\in\mathbb{Z}^3}|\bm{h}\bm{m}|^4e^{-\frac{|\bm{h}\bm{m}|^2}{\widetilde{s}_{\ell}^2}}. \end{equation} 
\end{proposition}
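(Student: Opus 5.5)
The plan is to treat each $H_\ell^{\mathrm r}$ and $H_\ell^{\mathrm{nr}}$ separately, for fixed $\ell$, as a lattice sum of a Schwartz function sampled on the reciprocal lattice $\bm{k}=2\pi\bm{h}^{-\top}\bm{m}$. I would then convert it to a sum over the direct lattice $\bm{h}\bm{n}$ with the Poisson summation formula.

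\textbf{Step 1: Poisson summation on the reciprocal lattice.} For $g\in\mathcal{S}(\mathbb{R}^3)$, change variables $\bm{k}=2\pi\bm{h}^{-\top}\bm{m}$ in the standard formula on $\mathbb{Z}^3$, which produces the Jacobian $\det(\bm{h})/(2\pi)^3$. This gives
\[
\sum_{\bm{m}\in\mathbb{Z}^3} g(2\pi\bm{h}^{-\top}\bm{m})=\frac{\det(\bm{h})}{(2\pi)^3}\sum_{\bm{n}\in\mathbb{Z}^3}\check g(\bm{h}\bm{n}),\qquad \check g(\bm{x}):=\int_{\mathbb{R}^3} g(\bm{k})e^{i\bm{k}\cdot\bm{x}}\,\mathrm{d}\bm{k}.
\]
It holds because $\bm{k}\cdot\bm{h}\bm{n}=2\pi\bm{m}\cdot\bm{n}$. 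Absolute convergence and the validity of Poisson summation are immediate, since both $g$ and $\check g$ are Gaussians times polynomials.

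\textbf{Step 2: the Gaussian transform and the polynomial factors.} Write $G_s(\bm{k})=e^{-s^2k^2/4}$ with $s=\widetilde{s}_\ell$. Its transform is
\[
\check G_s(\bm{x})=(4\pi/s^2)^{3/2}e^{-|\bm{x}|^2/s^2}.
\]
The factor $k^2$ corresponds to $-\Delta_{\bm{x}}$, and $k^4$ corresponds to $\Delta_{\bm{x}}^2$, because $\Delta_{\bm{x}} e^{i\bm{k}\cdot\bm{x}}=-k^2e^{i\bm{k}\cdot\bm{x}}$. So it remains to compute the Laplacian and bi-Laplacian of the radial function $e^{-r^2/s^2}$ in three dimensions, using $\Delta=\partial_r^2+\frac{2}{r}\partial_r$:
\[
-\Delta e^{-r^2/s^2}=\Bigl(\tfrac{6}{s^2}-\tfrac{4r^2}{s^4}\Bigr)e^{-r^2/s^2},\qquad
\Delta^2 e^{-r^2/s^2}=\Bigl(\tfrac{60}{s^4}-\tfrac{80r^2}{s^6}+\tfrac{16r^4}{s^8}\Bigr)e^{-r^2/s^2}.
\]
The second identity comes from applying $\Delta$ once more to the first. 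It is the only computation with real bookkeeping, and I expect it to be the main place where errors creep in. I would check it by tracking the coefficients of $r^0$, $r^2$ and $r^4$ separately.

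\textbf{Step 3: collect the constants.} The common scalar factor is
\[
\pi^{3/2}\widetilde{w}_\ell \cdot \frac{\det(\bm{h})}{(2\pi)^3}\cdot\frac{(4\pi)^{3/2}}{s^3}=\frac{\det(\bm{h})\,\widetilde{w}_\ell}{s^3}.
\]
For $H_\ell^{\mathrm r}$, multiply this by $s^5$ and by the bracket from $-\Delta$. This gives $\det(\bm{h})\widetilde{w}_\ell\,(6-4r^2/s^2)$ summed over $r=|\bm{h}\bm{n}|$, which is Eq.~\eqref{eq::H2} in terms of $A_{0,\ell}$ and $A_{2,\ell}$. For $H_\ell^{\mathrm{nr}}$, multiply instead by $s^7$ and by the bi-Laplacian bracket. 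This gives $\det(\bm{h})\widetilde{w}_\ell\,(60-80r^2/s^2+16r^4/s^4)$, which is Eq.~\eqref{eq::H2r}.

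Finally, I would note that the $\bm{m}=\bm{0}$ term in Eqs.~\eqref{eq::Sr} and \eqref{eq::Snr} vanishes anyway, because of the factors $k^2$ and $k^4$. So including it in the Poisson sum is harmless, and no correction term is needed.
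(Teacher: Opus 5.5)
Your proposal is correct and follows the same route the paper indicates: Poisson summation from the reciprocal lattice $2\pi\bm{h}^{-\top}\mathbb{Z}^3$ to the direct lattice $\bm{h}\mathbb{Z}^3$, with the $k^2$ and $k^4$ factors handled as $-\Delta$ and $\Delta^2$ acting on the Gaussian. Your constant $\det(\bm{h})\widetilde{w}_\ell/\widetilde{s}_\ell^{3}$ and the coefficients $(6,-4)$ and $(60,-80,16)$ all check out; the paper only states the Poisson step without these details.
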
 

In MD simulations, we use the stochastic pressure tensor 
\begin{equation}
\bm{P}_{\text{coul}}^{\mathcal{F}*}:=\bm{P}_{\text{coul}}^{\mathcal{F},\mathrm{r}*}+\bm{P}_{\text{coul}}^{\mathcal{F},\text{nr}*}
\end{equation}
to approximate $\bm{P}_{\text{coul}}^{\mathcal{F}}$. Since the number of distinct Fourier modes is at most $P$, evaluating $\bm{P}_{\text{coul}}^{\mathcal{F}*}$ costs $O(PN)$ computation and $O(P)$ communication per time step. This implies that the RBSOG method has linear complexity when $P=O(1)$. Section~\ref{sec::Error} shows that $\bm{P}_{\text{coul}}^{\mathcal{F}*}$ is unbiased with bounded variance. In the NPT updates, we also use the random batch idea in Eq.~\eqref{eq::randombatch} to construct the force estimator (see~\ref{app::RBForce}). The overall RBSOG method is summarized in Algorithm~\ref{alg::RBSOG-NPT}.

The proposed RBSOG method offers several advantages. Compared with FFT-based methods~\cite{Darden1993JCP,di2015stochastic,lindbo2011spectral}, RBSOG reduces the communication cost by replacing FFT with the random batch importance sampling in Fourier space, and the computational cost is reduced from $O(N\log N)$ to $O(N)$. Compared with Ewald-based RBE method~\cite{liang2022iso}, RBSOG replaces the classical Ewald decomposition with the SOG decomposition, addressing the discontinuity issue and reducing truncation error. Moreover, the measure-recalibration strategy further reduces variance at the same batch size. In the numerical tests of Section~\ref{sec::NumResult}, we observe that RBSOG requires substantially fewer sampled modes (often $1/4$ or less) to reach the same accuracy as the RBE baseline. Finally, by combining with kernel-independent SOG methods~\cite{greengard2018anisotropic,gao2022kernel,lin2025weighted}, the RBSOG framework can be readily extended to other long-range kernels.

\begin{algorithm}[H]
\caption{Random batch sum-of-Gaussians method for NPT simulations}\label{alg::RBSOG-NPT}
\begin{algorithmic}[1]
\Require Select cutoff $r_c$; time step $\Delta t$; total simulation steps $N_{\text{step}}$; batch size $P$; SOG parameters $\{\widetilde{w}_\ell,\widetilde{s}_\ell\}_{\ell=0}^{\widetilde{M}-1}$; initial positions $\{\bm{r}_i\}$ and velocities $\{\bm{v}_i\}$; initial cell tensor $\bm{h}$.
\For{$n=1,2,\ldots,N_{\text{step}}$}
\State Sample $P$ modes $\{\bm{k}_p^{\mathrm r}\}_{p=1}^{P} \sim \mathscr{P}^{\text{r}}(\bm{k})$ using MH (\ref{app::MHsamp}).
\State Sample $\{\bm{k}_p^{\mathrm{nr}}\}_{p=1}^{P}\sim\mathscr{P}^{\text{nr}}(\bm{k})$ by the measure-recalibration strategy in Section~\ref{sec::Sampling}, reusing $\{\bm{k}_p^{\mathrm r}\}$ as proposals.
\State Compute the short-range pressure $\bm{P}_{\mathrm{coul}}^{\mathcal{N}}$ (Eq.~\eqref{eq::P_short}) and short-range force $\bm{F}_{i,\text{coul}}^{\mathcal{N}}$ (Eq.~\eqref{eq::force_SOG}) by direct truncation at the cutoff $r_c$.
\State For each distinct mode in $\{\bm{k}_p^{\mathrm r}\}\cup\{\bm{k}_p^{\mathrm{nr}}\}$, evaluate and reduce the structure factor $\rho(\bm{k})$.
\State Compute the radial and non-radial long-range pressure estimators $\bm{P}_{\mathrm{coul}}^{\mathcal{F},\mathrm{r}*}$ (Eq.~\eqref{eq::P_approx_r}) and $\bm{P}_{\mathrm{coul}}^{\mathcal{F},\mathrm{nr}*}$ (Eq.~\eqref{eq::P_approx_nr}), and compute the long-range force estimator $\bm{F}_{i,\mathrm{coul}}^{\mathcal{F}*}$ (Eq.~\eqref{eq::f_approx}).
\State Form the total Coulombic pressure and force estimators, $\bm{P}_{\mathrm{coul}}^{*}=\bm{P}_{\mathrm{coul}}^{\mathcal{N}}+\bm{P}_{\mathrm{coul}}^{\mathcal{F},\mathrm{r}*}+\bm{P}_{\mathrm{coul}}^{\mathcal{F},\mathrm{nr}*}$ and $\bm{F}_{i,\mathrm{coul}}^{*}=\bm{F}_{i,\mathrm{coul}}^{\mathcal{N}}+\bm{F}_{i,\mathrm{coul}}^{\mathcal{F}*}$. 
\State Integrate the NPT equations of motion with the chosen thermostat/barostat using $\bm{P}_{\mathrm{coul}}^{*}$ and $\bm{F}_{i,\mathrm{coul}}^{*}$.
\EndFor
\Ensure trajectories $\{(\{\bm{r}_i^{n}\},\{\bm{v}_i^{n}\},\bm{h}^{n})\}_{n=1}^{N_{\text{step}}}$.
\end{algorithmic}
\end{algorithm}

\subsection{Error Estimates and Parameter Selection for SOG Decomposition}
\label{sec::Error}
In this section, we extend the pointwise SOG error estimate in Eq.~\eqref{eq::approxerror} to pressure-tensor errors. The decomposition error of the pressure tensor can be written as
\begin{equation}
   \label{eq::press_error}
    \begin{aligned}
    \bm{P}_{\text{coul},\text{err}}=\frac{1}{2\det(\bm{h})}\sum_{\bm{n}\in\mathbb{Z}^3}\sum_{i,j=1}^{N}q_iq_j K(|\bm{r}_{ij}+\bm{n}\bm{h}|)(\bm{r}_{ij}+\bm{n}\bm{h})\otimes(\bm{r}_{ij}+\bm{n}\bm{h}),
    \end{aligned}
\end{equation}
where the error kernel is defined by
\begin{equation}\label{eq::ErrKernel}
K(r):=\left(\frac{1}{r^3}-\sum_{\ell=0}^{\widetilde{M}-1}\widetilde{w}_\ell e^{-r^2/\widetilde{s}_\ell^2}\right) H(r-r_c).
\end{equation}
Here $H(r)$ is the Heaviside step function with $H(r)=1$ for $r\ge 0$ and $0$ otherwise. Applying the Poisson summation formula to Eq.~\eqref{eq::press_error} gives
\begin{equation}\label{eq::P_err_Fourier}
\bm{P}_{\mathrm{coul},\mathrm{err}}=
\frac{1}{2\det(\bm{h})^2}\sum_{\bm{k}}|\rho(\bm{k})|^2\widehat{\widetilde{\bm K}}(\bm{k}),
\end{equation}
where $\widehat{\widetilde{\bm K}}(\bm{k})$ denotes the Fourier transform of $K(r)\bm{r}\otimes\bm{r}$.
We next state Lemma~\ref{lem::Fourierexpansion}, which gives the explicit form of $\widehat{\widetilde{\bm K}}(\bm{k})$, and then derive Theorem~\ref{thm::PressError}.

\begin{lemma}\label{lem::Fourierexpansion}
Let $\widehat{K}(k)$ denote the Fourier transform of $K(r)$. Then $\widehat{\widetilde{\bm K}}(\bm{k})$ can be expressed as
\begin{equation}\label{eq::FourierTransform_Ktilde}
\widehat{\widetilde{\bm K}}(\bm{k})=
-\frac{\widehat{ K}'(k)}{k}\,\bm I
-\left(\widehat{ K}''(k)-\frac{\widehat{ K}'(k)}{k}\right)\frac{\bm k\otimes\bm k}{k^2},
\end{equation}
where $\widehat{K}'(k)$ and $\widehat{K}''(k)$ are the first and second derivatives of $\widehat{K}(k)$ with respect to $k$.
\end{lemma}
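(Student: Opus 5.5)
The plan is to use the standard identity that multiplication by $r_\mu r_\nu$ in real space becomes $-\partial_{k_\mu}\partial_{k_\nu}$ in Fourier space. With the convention $\hat f(\bm k)=\int f(\bm r)e^{-i\bm k\cdot\bm r}\,\mathrm d\bm r$ in Eq.~\eqref{eq::FourierTransform}, differentiating under the integral twice gives
\begin{equation*}
\partial_{k_\mu}\partial_{k_\nu}\widehat{K}(\bm k)=\int K(|\bm r|)\,(-i r_\mu)(-i r_\nu)e^{-i\bm k\cdot\bm r}\,\mathrm d\bm r=-\widehat{\widetilde{\bm K}}_{\mu\nu}(\bm k).
\end{equation*}
Hence $\widehat{\widetilde{\bm K}}(\bm k)=-\nabla_{\bm k}\nabla_{\bm k}\widehat K(\bm k)$, where $\widehat K(\bm k)$ is understood as the full-space transform of the radial function $K$.

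Since $K$ is radial, $\widehat K(\bm k)=\widehat K(k)$ depends only on $k=|\bm k|$. For a radial function, the Hessian follows from the chain rule. We have $\partial_\mu \widehat K=\widehat K'(k)k_\mu/k$, and
\begin{equation*}
\partial_\nu\partial_\mu \widehat K=\widehat K''(k)\frac{k_\mu k_\nu}{k^2}+\widehat K'(k)\left(\frac{\delta_{\mu\nu}}{k}-\frac{k_\mu k_\nu}{k^3}\right).
\end{equation*}
Negating this and grouping terms gives exactly Eq.~\eqref{eq::FourierTransform_Ktilde}: the coefficient of $\bm I$ is $-\widehat K'/k$, and the coefficient of $\bm k\otimes\bm k/k^2$ is $-(\widehat K''-\widehat K'/k)$.

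The main obstacle is justifying the differentiation under the integral, since $K(r)\sim r^{-3}$ decays slowly. The second moment $K(r)r^2$ is then not absolutely integrable at infinity, and $\widehat K$ itself is only conditionally defined, behaving logarithmically near $k=0$.

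I would handle this in one of two ways. The first is to regard the identity in the sense of tempered distributions, where multiplication by polynomials always corresponds to differentiation of the transform. The second is to insert a convergence factor $e^{-\epsilon r^2}$, prove the identity for the regularized kernel where everything is absolutely convergent, and let $\epsilon\to0$ for each fixed $\bm k\neq\bm 0$. Away from the origin, $\widehat K$ is smooth in $k$ because $K$ is a smooth radial function with power-law decay outside $r_c$ plus a jump at $r_c$, so the pointwise derivatives $\widehat K'$ and $\widehat K''$ make sense there. The $\bm k=\bm 0$ term is removed under tinfoil boundary conditions, as in the proof of Eq.~\eqref{eq::P_long}, so the identity is only needed for $\bm k\neq\bm 0$.
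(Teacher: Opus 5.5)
Your argument is correct. The paper states this lemma without proof, and your route is exactly the standard derivation it implicitly relies on: the identity $\widehat{K(r)\,r_\mu r_\nu}=-\partial_{k_\mu}\partial_{k_\nu}\widehat K$, followed by the chain rule for the Hessian of a radial function. Applied to a single Gaussian, your formula reproduces the component-wise result in Eq.~\eqref{eq::gbsigma}. Your treatment of the slow $r^{-3}$ decay, read distributionally for $\bm k\neq\bm 0$, adds rigor the paper omits. One small nit: under the $C^0$ calibration, $K$ is continuous at $r_c$ (a kink rather than a jump), though this does not affect the smoothness of $\widehat K$ away from the origin.
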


\begin{theorem}
\label{thm::PressError}
For the $C^0$-continuous SOG decomposition of the pressure tensor constructed in Section~\ref{sec::Pressure}, the decomposition error can be estimated by
\begin{equation}
\label{eq::Press_C0}
\left\|\bm{P}_{\mathrm{coul},\mathrm{err}}\right\| \simeq O\left(\frac{1}{\det(\bm{h})}\left[(\log \widetilde{b})^{-3/2} e^{-\frac{\pi^2}{2\log \widetilde{b}}}+\widetilde{b}^{-3 \widetilde{M}}+\left(\widetilde{w}_{-1} e^{-r_c^2/ \widetilde{s}_{-1}^2} - (\widetilde{\omega} - 1)\widetilde{w}_0 e^{-r_c^2/ \widetilde{s}_{0}^2}\right)\right]\right),
\end{equation}
where $\simeq$ indicates ``asymptotically equal'' as $\widetilde{b}\rightarrow 1$.
\end{theorem}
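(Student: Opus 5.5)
The plan is to start from the Fourier representation \eqref{eq::P_err_Fourier} together with Lemma~\ref{lem::Fourierexpansion}. These reduce the pressure error to estimates on $\widehat K'(k)/k$ and $\widehat K''(k)$. First, bound the structure factor crudely, $|\rho(\bm k)|^2\le(\sum_i|q_i|)^2$, or better, use $|\rho(\bm k)|^2=O(k^2)$ at small $k$ from charge neutrality. Second, replace the lattice sum $\frac{1}{\det(\bm h)}\sum_{\bm k}$ by the corresponding integral over $\bm k$, up to lower-order terms. Then $\|\bm P_{\mathrm{coul,err}}\|\lesssim \frac{1}{\det(\bm h)}\sup_k\bigl(|\widehat K'(k)/k|+|\widehat K''(k)|\bigr)$ times a constant independent of the parameters.

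The task is thus reduced to bounding derivatives of $\widehat K$. Since $K$ is radial, $\widehat K(k)=\frac{4\pi}{k}\int_0^\infty rK(r)\sin(kr)\,dr$. Derivatives in $k$ bring down powers of $r$, so $|\widehat K'(k)/k|$ and $|\widehat K''(k)|$ are controlled by weighted moments $\int_0^\infty r^4|K(r)|\,dr$.

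Next, split $K$ on $r\ge r_c$ by writing $\frac{1}{r^3}-\sum_{\ell=0}^{\widetilde M-1}\widetilde w_\ell e^{-r^2/\widetilde s_\ell^2}$ as three pieces:
\begin{enumerate}
\item the full bilateral-series error $\frac{1}{r^3}-\sum_{\ell\in\mathbb Z}\widetilde w_\ell e^{-r^2/\widetilde s_\ell^2}$;
\item the omitted wide Gaussians $\sum_{\ell\ge\widetilde M}\widetilde w_\ell e^{-r^2/\widetilde s_\ell^2}$;
\item the omitted narrow Gaussians $\sum_{\ell\le -1}\widetilde w_\ell e^{-r^2/\widetilde s_\ell^2}$, minus the $\widetilde\omega$-correction $(\widetilde\omega-1)\widetilde w_0e^{-r^2/\widetilde s_0^2}$.
\end{enumerate}

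For the first piece, use \eqref{eq::BSA_err} with $\beta=3$. Its relative error is $\lesssim(\log\widetilde b)^{-1/2}\cdot(\pi^2/(\log \widetilde b)^2)^{1/2}e^{-\pi^2/(2\log\widetilde b)}$, which is of order $(\log\widetilde b)^{-3/2}e^{-\pi^2/(2\log\widetilde b)}$ as $\widetilde b\to1$. Multiplied by $r^{-3}$ it gives an integrable, absolutely bounded contribution. More precisely, I would perform the Poisson-summation step on the trapezoidal rule directly in Fourier space, where the Gaussian sum has an explicit transform. This yields the same factor.

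For the second piece, the weights are $\widetilde w_\ell\propto\widetilde b^{-3\ell}$. Summing the geometric tail from $\ell=\widetilde M$ gives the $\widetilde b^{-3\widetilde M}$ term, and the Fourier-side derivative factors stay bounded because the Gaussians are wide.

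For the third piece, on $r\ge r_c$ the series is dominated by its $\ell=-1$ term, since $\widetilde s_{-1}<\widetilde s_0$ and the terms decay super-geometrically. This gives $\widetilde w_{-1}e^{-r_c^2/\widetilde s_{-1}^2}$, and the rescaling \eqref{omega0} contributes $-(\widetilde\omega-1)\widetilde w_0e^{-r_c^2/\widetilde s_0^2}$. Moments of these Gaussians restricted to $r\ge r_c$ are bounded by their value at $r_c$ up to polynomial factors, which are absorbed into the $O$.

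The main obstacle is the first piece. The pointwise relative bound \eqref{eq::BSA_err} does not immediately control the $r^4$-weighted moments at large $r$: $r^4\cdot r^{-3}\epsilon$ is not integrable. I would handle this in Fourier space instead. The Fourier transform of $r^{-3}\,\bm r\otimes\bm r$ and that of the full bilateral Gaussian series differ by the aliasing error of the trapezoidal rule applied to the $t$-integral \eqref{eq::BSAInt} after transforming each Gaussian. The dominant aliasing term is $e^{-\pi^2/(2\log\widetilde b)}$ with a power prefactor obtained from $|\Gamma(3/2+i\pi/\log\widetilde b)|$-type Mellin factors, which gives $(\log\widetilde b)^{-3/2}$. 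The Heaviside cutoff is then reintroduced as a compactly supported correction that is bounded by the other two pieces.
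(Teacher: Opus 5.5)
Your splitting of $K$ into the full bilateral error, the wide tail, and the narrow tail with the $\widetilde\omega$-correction is exactly the paper's $T+G_{\mathrm{up}}+G_{\mathrm{down}}$. Like the paper, you pass to the mean-field $\bm k$-integral \eqref{eq::integral} and use Lemma~\ref{lem::Fourierexpansion}. The gap is your next step: the bound $\|\bm P_{\mathrm{coul,err}}\|\lesssim\det(\bm h)^{-1}\sup_k(|\widehat K'(k)/k|+|\widehat K''(k)|)$ with a parameter-independent constant, fed by the moments $\int_0^\infty r^4|K|\,dr$.

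After the sum-to-integral step, the integral runs over all of $\mathbb{R}^3$, and $|\rho(\bm k)|^2$ does not decay. Each piece also jumps at $r_c$ (only the sum $K$ is $C^0$ there), so the transforms decay only like $\cos(kr_c)/k^2$. These integrals converge at best oscillatorily, and no supremum controls them. Worse, if you replace $|\rho(\bm k)|^2$ by the constant $\sum_i q_i^2$, each piece integrates formally to $(2\pi)^3$ times its real-space tensor at $\bm r=\bm 0$. That value is zero because of the factor $\bm r\otimes\bm r$ and the Heaviside cutoff. So the whole error lives in the correlation part of $|\rho(\bm k)|^2$, which is exactly what a crude bound on $|\rho|^2$ throws away.

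The reduction fails outright for $G_{\mathrm{up}}$. Your claim that the Fourier factors stay bounded "because the Gaussians are wide" is backwards. The Fourier-space weights $\widetilde w_\ell\widetilde s_\ell^5\propto\widetilde b^{2\ell}$ grow, so $\int_0^\infty r^4G_{\mathrm{up}}(r)\,dr\propto\sum_{\ell\ge\widetilde M}\widetilde b^{2\ell}=\infty$. Correspondingly, the transform $\widehat G_{\mathrm{up}}$ satisfies $|\widehat G_{\mathrm{up}}'(k)/k|\sim 4\pi/k^2$ as $k\to0$, because the tail reproduces $1/r^3$ for $r\gg\widetilde s_{\widetilde M}$. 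This is the same non-integrability you flagged for $T$. The smallness $\sum_{\ell\ge\widetilde M}\widetilde w_\ell=O(\widetilde b^{-3\widetilde M})$ is a statement about short distances, and a supremum over $k$ cannot see it.

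The missing input is the small-$k$ vanishing of the mean-field (Debye--H\"uckel) structure factor, which you mention and then set aside. For point ions, Appendix~\ref{app::DH_theory} gives $|\rho(\bm k)|^2\approx(\sum_iq_i^2)\,k^2/(k^2+\kappa^2)\le(\sum_iq_i^2)\,k^2/\kappa^2$. Then each wide Gaussian contributes to the radial part an amount $\propto\frac{\widetilde w_\ell\widetilde s_\ell^5}{\kappa^2}\int_0^\infty k^4e^{-\widetilde s_\ell^2k^2/4}\,\mathrm dk=12\sqrt\pi\,\widetilde w_\ell/\kappa^2$, and similarly for the $\bm k\otimes\bm k$ part. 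Summed over $\ell\ge\widetilde M$, this gives $O(\widetilde b^{-3\widetilde M})$.

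In real space the screening cloud $\propto e^{-\kappa r}/r$ probes the pieces only at $r\ge r_c$ within a few screening lengths. There $|G_{\mathrm{up}}|\le\sum_{\ell\ge\widetilde M}\widetilde w_\ell$, and $G_{\mathrm{down}}$ is dominated by its value at $r_c$, which produces the third term. The same screening removes your "main obstacle" for $T$: the factor $e^{-\kappa r}$ makes $r^4\cdot\epsilon r^{-3}$ integrable, so \eqref{eq::BSA_err} suffices. The paper likewise keeps $|\rho(\bm k)|^2$ inside the mean-field integral and estimates the resulting oscillatory Fourier integrals via the cited reference, rather than pulling out a supremum.

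A smaller point concerns the exponent of $\log\widetilde b$. Eq.~\eqref{eq::BSA_err} with $\beta=3$ gives the prefactor $(\log\widetilde b)^{-1}$; there is no extra $(\log\widetilde b)^{-1/2}$ in it. Likewise $|\Gamma(3/2+iy)|\sim\sqrt{2\pi}\,y\,e^{-\pi y/2}$ also gives $(\log\widetilde b)^{-1}$. In your aliasing picture the power $-3/2$ comes from the $\bm k\otimes\bm k$ component. Its Gaussian sum is $\propto\sum_\ell\widetilde b^{4\ell}e^{-\widetilde b^{2\ell}x}$, with aliasing factor $|\Gamma(2+i\pi/\log\widetilde b)|\sim\sqrt{2\pi}\,(\pi/\log\widetilde b)^{3/2}e^{-\pi^2/(2\log\widetilde b)}$. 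This does not spoil the final $O$-bound, but your stated derivation of the exponent is wrong.
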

\begin{proof}
We split the error kernel $K(\bm{r})$ into three parts,
\begin{equation}
K(r)=T(r) + G_{\text{up}}(r) +  G_{\text{down}}(r),
\end{equation}
where
\begin{equation}
    \label{eq::T_Gup}
    T(r):=\left(\frac{1}{r^3}-\sum_{\ell=-\infty}^{+\infty}\widetilde{w}_\ell e^{-r^2/\widetilde{s}_\ell^2}\right) H(r-r_c),\quad  G_{\text{up}}(r):= \sum_{\ell=\widetilde{M}}^{+\infty}\widetilde{w}_\ell e^{-r^2/\widetilde{s}_\ell^2}H(r-r_c),
\end{equation}
and
\begin{equation}
    \label{eq::Gdown}
    G_{\text{down}}(r):= \left(\sum_{\ell=-\infty}^{-1}\widetilde{w}_\ell e^{-r^2/\widetilde{s}_\ell^2}-(\widetilde{\omega}-1)\widetilde{w}_0e^{-r^2/\widetilde{s}_0^2}\right)H(r-r_c).
\end{equation}
Here, the first term $T(r)$ is associated with trapezoidal quadrature error in discretizing Eq.~\eqref{eq::BSA_beta}; the last two terms arise from truncation of the Gaussian series; and the prefactor $1-\omega$ associated with $\widetilde{w}_0$ comes from the $C^0$-continuous construction in Eq.~\eqref{omega0}. Therefore, $\bm{P}_{\text{coul},\text{err}}$ can be rewritten as
\begin{equation}\label{eq::SOGSplitting}
\bm{P}_{\text{coul},\text{err}}=\frac{1}{2\det(\bm{h})^2}\sum_{\bm{k}}|\rho(\bm{k})|^2\left(\widehat{\widetilde{\bm T}}(\bm{k})+\widehat{\widetilde{\bm G}}_{\text{up}}(\bm{k})+\widehat{\widetilde{\bm G}}_{\text{down}}(\bm{k})\right),
\end{equation}
where $\widehat{\widetilde{\bm T}}(\bm{k})$, $\widehat{\widetilde{\bm G}}_{\text{up}}(\bm{k})$, and $\widehat{\widetilde{\bm G}}_{\text{down}}(\bm{k})$ are the Fourier transforms of $T(r)\bm{r}\otimes\bm{r}$, $G_{\text{up}}(r)\bm{r}\otimes\bm{r}$, and $G_{\text{down}}(r)\bm{r}\otimes\bm{r}$, respectively. For each component, the sum over Fourier modes $\bm{k}$ can be safely approximate by an integral~\cite{kolafa1992cutoff,LIANG2025101759}
\begin{equation}\label{eq::integral}
\sum_{\bm{k}} \simeq \frac{|\det(\bm{h})|}{(2 \pi)^3} \int_0^{\infty} k^2 d k \int_{-1}^1 d \cos \theta \int_0^{2 \pi} d \varphi,
\end{equation}
where $(k, \theta, \varphi)$ are spherical coordinates and $\simeq$ indicates asymptotically equal in the mean-field limit. Applying Eq.~\eqref{eq::integral} to each component of Eq.~\eqref{eq::SOGSplitting}, together with Lemma~\ref{lem::Fourierexpansion}, yields three oscillatory Fourier integrals, which can be estimated by extending Theorem 2 and Appendix C in Ref.~\cite{LIANG2025101759}. The estimates of $\widehat{\widetilde{\bm T}}$, $\widehat{\widetilde{\bm G}}_{\text{up}}$, and $\widehat{\widetilde{\bm G}}_{\text{down}}$ correspond to the three terms in Eq.~\eqref{eq::Press_C0}, respectively.
\end{proof}

Theorem~\ref{thm::PressError} shows that the pressure-tensor error decays as $O(\widetilde{b}^{-3\widetilde{M}})$ until limited by the remaining terms in Eq.~\eqref{eq::Press_C0}.
To validate Theorem~\ref{thm::PressError}, we test an SPC/E bulk water system~\cite{berendsen1987missing} ($21624$ atoms) with cutoff $r_c = 9 \,\text{\AA}$ and define the relative error of diagonal and off-diagonal pressure components by
\begin{equation}
    \label{eq::err_press_r}
    \mathcal{E}_{P}^{\text{diag}}=\max_{1\leq\alpha\leq 3}\left|\frac{(\bm{P}_{\text{RBSOG}})_{\alpha\alpha}-(\bm{P}_{\text{ref}})_{\alpha\alpha}}{(\bm{P}_{\text{ref}})_{\alpha\alpha}}\right|
\end{equation}
and
\begin{equation}
    \label{eq::err_press_nr}
    \mathcal{E}_{P}^{\text{off-diag}}=\max_{1\leq\alpha<\beta\leq 3}\left|\frac{(\bm{P}_{\text{RBSOG}})_{\alpha\beta}-(\bm{P}_{\text{ref}})_{\alpha\beta}}{(\bm{P}_{\text{ref}})_{\alpha\beta}}\right|,
\end{equation}
where the reference solution $P_{\text{ref}}$ is computed to machine precision by using the SOG decomposition with $\widetilde{b}=1.14$ and sufficiently large $\widetilde{M}$. Figure~\ref{fig::press_M} shows the dependence of error on $\widetilde{M}$.
Both diagonal and off-diagonal components exhibit the expected $O(\widetilde{b}^{-3\widetilde{M}})$ decay, in agreement with our theoretical prediction. Table~\ref{table::para_error} then provides recommended SOG parameter sets for practical accuracy targets.

\begin{figure}[ht]
    \centering
    \includegraphics[width=1.0\linewidth]{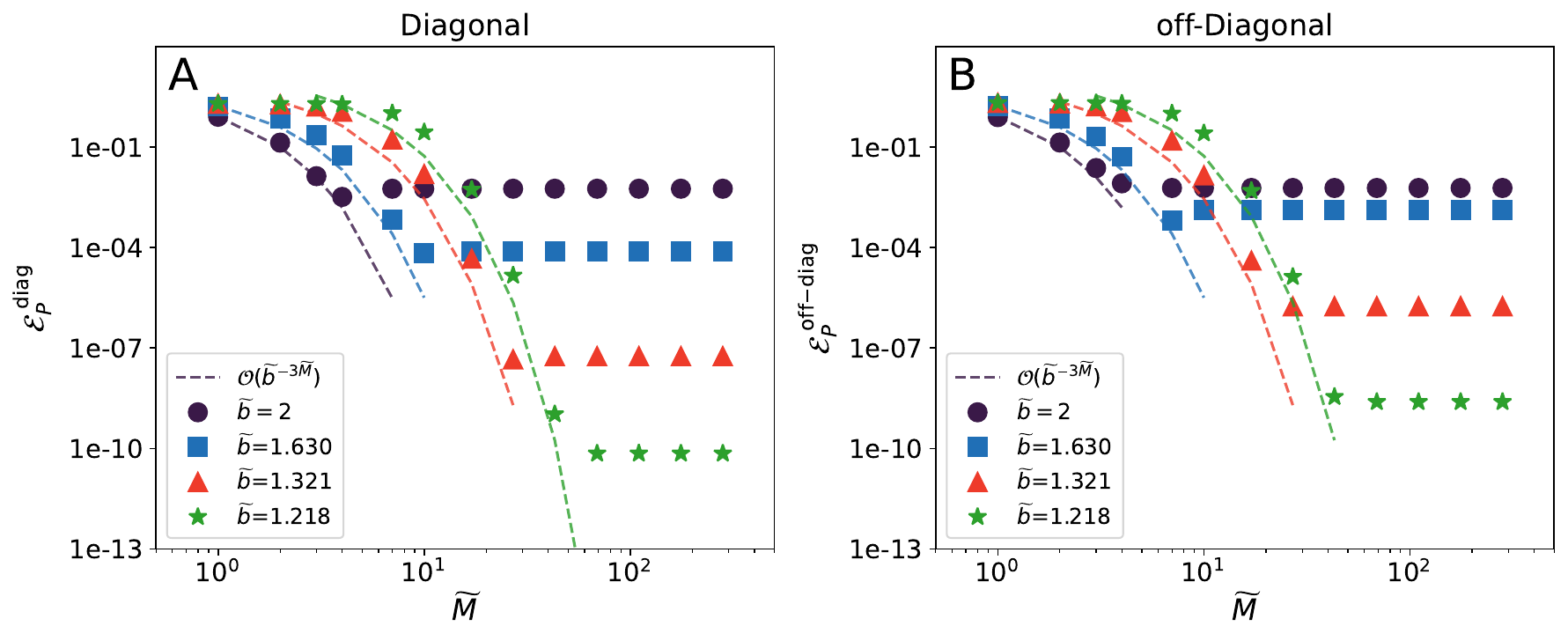}
    \caption{Relative errors of the SOG decomposition introduced in Section~\ref{sec::Pressure} for an SPC/E bulk water system. Panels (A) and (B) show the relative errors of the diagonal and off-diagonal components of the instantaneous pressure tensor, respectively, as functions of the number of Gaussian terms $\widetilde{M}$. The dashed lines in (A)-(B) indicate a fitted decay rate of $\mathcal{O}(\widetilde{b}^{-3\widetilde{M}})$.}
    \label{fig::press_M}
\end{figure}

\renewcommand\arraystretch{1.4}
\begin{table}[!htbp]
\caption{Parameter sets and corresponding errors for the SOG decomposition at $r_c = 9\,\text{\AA}$.
The pressure error is defined as $\max\{\mathcal{E}_{P}^{\text{diag}},\mathcal{E}_{P}^{\text{off-diag}}\}$ and is computed using the pressure decomposition in Section~\ref{sec::Pressure}.
The energy/force error is computed using the decomposition in Section~\ref{sec::useries} with the same splitting parameters, $b=\widetilde{b}$, $\sigma=\widetilde{\sigma}$, and $\omega=\widetilde{\omega}$.
Here $M$ and $\widetilde{M}$ denote the minimal numbers of Gaussian terms required to achieve the listed errors in energy/force and instantaneous pressure tensor, respectively.}
\centering
\vspace{1em}
\scalebox{0.73}{
\begin{tabular}{c|c|c|cc|cc|cc}
\hline \multirow{2}{*}{\large $\widetilde{b}$}& \multirow{2}{*}{\large $\widetilde{\sigma}$}& \multirow{2}{*}{\large $\widetilde{\omega}$} & \multicolumn{2}{c|}{\large Energy} & \multicolumn{2}{c|}{\large Force} & \multicolumn{2}{c}{\large Pressure}\\
\cline{4-5} \cline{6-7} \cline{8-9}& & &  Error & $M$ &  Error& $M$ &  Error& $\widetilde{M}$\\

\hline 2 & $4.524309831775139$ &$0.994446492762232$ & $1.22\hbox{E}-4$ & $12$ & $3.48\hbox{E}-3$ & $4$ & $3.24\hbox{E}-3$ & $4$\\

\hline $1.62976708826776469$ & $3.270345668108471$ &$1.007806979343806$  & $1.06\hbox{E}-5$ & $25$ & $1.83\hbox{E}-4$ & $7$ &$7.54\hbox{E}-5$ & $10$\\

\hline $1.32070036405934420$ & $2.049434420796892$ &$1.001889141148119$ & $7.80\hbox{E}-9$ & $64$ & $3.27\hbox{E}-7$ & $20$ & $4.59\hbox{E}-8$ & $27$\\

\hline $1.21812525709410644$ & $1.597010732309955$ &$1.000901461560333$ & $1.07\hbox{E}-11$ & $124$ & $6.79\hbox{E}-10$ & $40$ & $7.10\hbox{E}-11$ & $69$\\
\hline
\end{tabular}}
\label{table::para_error}
\end{table}

\begin{remark}
    SOG parameters should be selected according to the quantities that govern the dynamics.
    In NPT simulations, both pressure tensor and force enter the equations of motion, and matching a given pressure tolerance typically requires slightly more SOG terms than matching force tolerance; therefore, we recommend choosing the number of Gaussian terms from the pressure errors in Table~\ref{table::para_error}. In NVT simulations, force accuracy is the primary criterion. The energy accuracy are also useful for applications like Monte Carlo simulations.
\end{remark}

\subsection{Analysis of the RBSOG-based MD}
We next analyze convergence and per-step complexity of RBSOG-based NPT simulations. Define the long-range pressure fluctuation 
\begin{equation}
\label{eq::deviation_press}
\widetilde{\bm{\chi}}:=\bm{P}_{\text{coul}}^{\mathcal{F}}-\bm{P}_{\text{coul}}^{\mathcal{F},\text{r}*}-\bm{P}_{\text{coul}}^{\mathcal{F},\text{nr}*}.
\end{equation}
Its expectation and variance are given in Proposition~\ref{prop::Press_Var}.

\begin{proposition}
    \label{prop::Press_Var}
    Each component of the fluctuation, $\widetilde{\bm{\chi}}_{\mu\nu}$, has zero expectation, and its variance is given by
    \begin{equation}
      \label{eq::Press_var}
      \begin{aligned}
\mathbb{E}|\widetilde{\bm{\chi}}_{\mu\nu}|^2\le&\frac{\delta_{\mu\nu}}{P}\left[\frac{\pi^{3/2} S^{\text{r}}}{|\det(\bm{h})|^4}\sum_{\boldsymbol{k} \neq 0} \left(\sum_{\ell=0}^{\widetilde{M}-1}\widetilde{w}_\ell \widetilde{s}_\ell^5e^{-\widetilde{s}_\ell^2|\bm{k}|^2/4}\right)\frac{|\rho(\bm{k})|^4}{|\bm{k}|^2}-|(\bm{P}_{\mathrm{coul}}^{\mathcal{F},\mathrm{r}})_{\mu\nu}|^2\right]\\
&+\frac{1}{P}\left[\frac{\pi^{3/2}S^{\mathrm{nr}}}{36|\det(\bm{h})|^4}\sum_{\boldsymbol{k} \neq 0} \left(\sum_{\ell=0}^{\widetilde{M}-1}\widetilde{w}_\ell \widetilde{s}_\ell^7e^{-\widetilde{s}_\ell^2|\bm{k}|^2/4}\right)\frac{|\rho(\bm{k})|^4k_\mu^2k_\nu^2}{|\bm{k}|^4}-|(\bm{P}_{\mathrm{coul}}^{\mathcal{F},\mathrm{nr}})_{\mu\nu}|^2\right],
      \end{aligned}
    \end{equation}
where $\delta_{\mu\nu}=1$ if $\mu=\nu$ and $\delta_{\mu\nu}=0$ otherwise, and $\mu,\nu\in\{1,2,3\}$ denote Cartesian components.
\end{proposition}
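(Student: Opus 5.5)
We treat each estimator as an average of $P$ identically distributed importance-sampling terms. For the radial part, define the single-sample random tensor
\[
\bm{X}^{\mathrm r}(\bm{k}) := \frac{S^{\mathrm r}}{4\det(\bm{h})^2}\frac{|\rho(\bm{k})|^2}{|\bm{k}|^2}\bm{I}, \qquad \bm{k}\sim\mathscr{P}^{\mathrm r}.
\]
A direct computation gives $\mathbb{E}\,\bm{X}^{\mathrm r} = \sum_{\bm{k}\neq\bm{0}}\mathscr{P}^{\mathrm r}(\bm{k})\bm{X}^{\mathrm r}(\bm{k})$. The factor $S^{\mathrm r}k^2$ cancels against the normalization in Eq.~\eqref{eq::samplingP}, so this equals Eq.~\eqref{eq::PcoulFr}.

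The non-radial part is handled the same way with
\[
\bm{X}^{\mathrm{nr}}(\bm{k}) := -\frac{S^{\mathrm{nr}}}{8\det(\bm{h})^2}\frac{|\rho(\bm{k})|^2\,\bm{k}\otimes\bm{k}}{|\bm{k}|^4}, \qquad \bm{k}\sim\mathscr{P}^{\mathrm{nr}}.
\]
Here the $k^4$ in Eq.~\eqref{eq::samplingPnr} cancels and the mean equals Eq.~\eqref{eq::PcoulFnr}. The $\bm{k}=\bm{0}$ mode carries zero weight because of the $k^2$ and $k^4$ factors in the two proposals, which is consistent with tinfoil boundary conditions. The samples $\bm{k}^{\mathrm{nr}}_p$ produced by the measure-recalibration MH chain have marginal law $\mathscr{P}^{\mathrm{nr}}$. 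This is the standard independence-sampler fact: the proposal $\mathscr{P}^{\mathrm r}$ is state-independent, detailed balance holds with respect to $\mathscr{Q}=\mathscr{P}^{\mathrm{nr}}$, and the chain is started or taken at stationarity, as the paper's idealization assumes. Linearity of expectation then gives $\mathbb{E}\,\widetilde{\bm{\chi}}_{\mu\nu}=0$.

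For the variance, write $\widetilde{\bm{\chi}} = -(\bm{Y}^{\mathrm r}+\bm{Y}^{\mathrm{nr}})$, where $\bm{Y}^{\mathrm r}$ and $\bm{Y}^{\mathrm{nr}}$ are the centered sample means. Treating the $P$ samples within each family as independent, the variance of each centered mean is $\tfrac1P\bigl(\mathbb{E}|X_{\mu\nu}|^2-|\mathbb{E}X_{\mu\nu}|^2\bigr)$. Computing the second moments:
\[
\mathbb{E}|X^{\mathrm r}_{\mu\nu}|^2 = \delta_{\mu\nu}\frac{S^{\mathrm r}}{16\det(\bm{h})^4}\sum_{\bm{k}\neq\bm{0}}\Bigl(\sum_\ell\pi^{3/2}\widetilde{w}_\ell\widetilde{s}_\ell^5e^{-\widetilde{s}_\ell^2k^2/4}\Bigr)\frac{|\rho|^4}{k^2},
\]
\[
\mathbb{E}|X^{\mathrm{nr}}_{\mu\nu}|^2 = \frac{S^{\mathrm{nr}}}{64\det(\bm{h})^4}\sum_{\bm{k}\neq\bm{0}}\Bigl(\sum_\ell\pi^{3/2}\widetilde{w}_\ell\widetilde{s}_\ell^7e^{-\widetilde{s}_\ell^2k^2/4}\Bigr)\frac{|\rho|^4k_\mu^2k_\nu^2}{k^4}.
\]
In each, one factor of the weight cancels against the density. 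These expressions match the structure of Eq.~\eqref{eq::Press_var}. The numerical prefactors shown there (and the $\delta_{\mu\nu}$ placement on the subtracted square) I would reconcile by bookkeeping; they are dominated by my constants up to a harmless factor, so the stated inequality should follow in its displayed form, possibly after checking normalization conventions.

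The main obstacle is the cross term $2\,\mathbb{E}[Y^{\mathrm r}_{\mu\nu}Y^{\mathrm{nr}}_{\mu\nu}]$. The two families are strongly dependent, since the nr samples are recycled r samples, and the MH states are also correlated across $p$. I would first try to avoid computing the cross term by using $\mathbb{E}|a+b|^2\le 2\mathbb{E}|a|^2+2\mathbb{E}|b|^2$. The factor $2$ would then need to be absorbed into the slack in the constants above, which is why I expect the final bound to be stated as an inequality rather than an equality. For the within-family correlations I would adopt the idealization used elsewhere in the paper: the MH chains are well mixed, so the $P$ samples are treated as i.i.d. draws from their targets. If a rigorous version is needed, one could instead invoke the independence sampler's uniform ergodicity, since $\mathscr{P}^{\mathrm{nr}}/\mathscr{P}^{\mathrm r}$ is bounded on the truncated lattice. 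That gives the variance bound up to an integrated-autocorrelation factor, which reduces to the stated one in the i.i.d. limit.
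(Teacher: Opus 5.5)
The paper states this proposition without a written proof. Your route (unbiasedness from cancellation against the proposal density, then per-sample second moments) is the natural one. Your constants $1/16$ and $1/64$ are in fact the exact ones, so each bracket in the displayed bound is looser than the corresponding single-family variance.

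Write $A^{\mathrm{r}}$, $A^{\mathrm{nr}}$ for your two second moments, and set $B^{\mathrm{r}}=|(\bm{P}_{\mathrm{coul}}^{\mathcal{F},\mathrm{r}})_{\mu\nu}|^2$, $B^{\mathrm{nr}}=|(\bm{P}_{\mathrm{coul}}^{\mathcal{F},\mathrm{nr}})_{\mu\nu}|^2$. The right-hand side of the proposition is then exactly $\frac{\delta_{\mu\nu}}{P}(16A^{\mathrm{r}}-B^{\mathrm{r}})+\frac{1}{P}(\frac{16}{9}A^{\mathrm{nr}}-B^{\mathrm{nr}})$.

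The gap is your handling of the cross term. After $\mathbb{E}|a+b|^2\le 2\mathbb{E}|a|^2+2\mathbb{E}|b|^2$, the factor $2$ cannot be absorbed, because the slack is lopsided. There is a factor $16$ on the radial second moment, but only $64/36=16/9<2$ on the non-radial one, and none on the subtracted squares.

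For $\mu\neq\nu$, the radial bracket is killed by $\delta_{\mu\nu}$, so your route gives $\frac{2}{P}(A^{\mathrm{nr}}-B^{\mathrm{nr}})$. This exceeds the target whenever $B^{\mathrm{nr}}<\frac{2}{9}A^{\mathrm{nr}}$. That is the generic case, e.g.\ any nearly isotropic configuration, where the off-diagonal mean is close to zero.

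For $\mu=\nu$, borrowing the radial slack would require $\frac{2}{9}A^{\mathrm{nr}}\le 14A^{\mathrm{r}}+B^{\mathrm{r}}+B^{\mathrm{nr}}$. This is a configuration-dependent comparison that nothing guarantees. It fails, for instance, when $|\rho(\bm{k})|^2$ is concentrated on a short-wavelength mode along the $\mu$ axis: there $A^{\mathrm{nr}}/A^{\mathrm{r}}$ grows like $\widetilde{s}_0^2k^2$, while $B^{\mathrm{r}}$ and $B^{\mathrm{nr}}$ are negligible.

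The repair is short and never needs the joint law of the recycled families.

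For $\mu\neq\nu$: both $\bm{P}_{\mathrm{coul}}^{\mathcal{F},\mathrm{r}*}$ and $\bm{P}_{\mathrm{coul}}^{\mathcal{F},\mathrm{r}}$ are multiples of $\bm{I}$. So $\widetilde{\bm{\chi}}_{\mu\nu}$ is purely the non-radial fluctuation and there is no cross term. Its variance $\frac{1}{P}(A^{\mathrm{nr}}-B^{\mathrm{nr}})$ is already below $\frac{1}{P}(\frac{16}{9}A^{\mathrm{nr}}-B^{\mathrm{nr}})$.

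For $\mu=\nu$: use the weighted inequality $(a+b)^2\le(1+t)a^2+(1+t^{-1})b^2$ (Cauchy--Schwarz on the covariance plus AM--GM) with any $t\in[9/7,15]$. For example, $t=2$ gives $3(A^{\mathrm{r}}-B^{\mathrm{r}})\le 16A^{\mathrm{r}}-B^{\mathrm{r}}$ and $\frac{3}{2}(A^{\mathrm{nr}}-B^{\mathrm{nr}})\le\frac{16}{9}A^{\mathrm{nr}}-B^{\mathrm{nr}}$. This is the displayed bound, however strongly $\{\bm{k}_p^{\mathrm{r}}\}$ and $\{\bm{k}_p^{\mathrm{nr}}\}$ are coupled.

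Finally, the within-family variance $\frac{1}{P}(A-B)$ genuinely requires the $P$ draws to be i.i.d.\ from $\mathscr{P}^{\mathrm{r}}$, resp.\ $\mathscr{P}^{\mathrm{nr}}$; positively correlated MH states inflate it. You should state this as an explicit hypothesis. Your uniform-ergodicity remark yields an autocorrelation-inflated constant, not the displayed inequality.
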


Proposition~\ref{prop::Press_Var} implies that the RBSOG approximation of the long-range pressure tensor is unbiased, i.e., $\mathbb{E}\left[\bm{P}_{\text{coul}}^{\mathcal{F},\text{r}*}+\bm{P}_{\text{coul}}^{\mathcal{F},\text{nr}*}\right]=\bm{P}_{\text{coul}}^{\mathcal{F}}$. Moreover, Eq.~\eqref{eq::Press_var} shows that the variance of each component scales as $O(1/P)$. A sharper statement is given by Theorem~\ref{thm::conv_rate_P} under Debye--H\"{u}ckel (DH) theory (see~\cite{levin2002electrostatic} and~\ref{app::DH_theory}).

\begin{theorem}
    \label{thm::conv_rate_P}
    Let $\eta_{\mathrm{sys}}=N/\det(\bm{h})$ be the number density. Under the DH theory, the variance of the random batch approximation of the pressure tensor, $\mathbb{E}|\widetilde{\bm{\chi}}|^2$, decays as $O(1/P)$ with respect to the batch size $P$, and the bound is independent of the particle number $N$. 
\end{theorem}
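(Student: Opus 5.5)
The plan is to start from the variance bound of Proposition~\ref{prop::Press_Var}. It already exhibits the $1/P$ prefactor, so the whole task reduces to showing that the two bracketed quantities are bounded independently of $N$. I will drop the negative terms $-|(\bm{P}^{\mathcal{F},\mathrm{r}}_{\mathrm{coul}})_{\mu\nu}|^2$ and $-|(\bm{P}^{\mathcal{F},\mathrm{nr}}_{\mathrm{coul}})_{\mu\nu}|^2$. I will also use $k_\mu^2k_\nu^2/|\bm{k}|^4\le 1$. What remains is to bound, for $a\in\{5,7\}$,
\begin{equation*}
\frac{S}{\det(\bm{h})^4}\sum_{\bm{k}\neq 0}\Big(\sum_{\ell}\widetilde{w}_\ell\widetilde{s}_\ell^{a}e^{-\widetilde{s}_\ell^2k^2/4}\Big)\frac{|\rho(\bm{k})|^4}{k^2}\,k^{a-5},
\end{equation*}
with $S=S^{\mathrm r}$ or $S^{\mathrm{nr}}$.

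The first ingredient is the scaling of the normalization constants. By Proposition~\ref{prop::Poisson}, in the regime where $\|\bm{h}\|\gg\widetilde{s}_\ell$ the sums $A_{j,\ell}$ are dominated by $\bm{m}=0$. This gives $H_\ell^{\mathrm r}\simeq 6\det(\bm{h})\widetilde{w}_\ell$ and $H_\ell^{\mathrm{nr}}\simeq 60\det(\bm{h})\widetilde{w}_\ell$, up to exponentially small corrections. For the wide Gaussians, where $\widetilde{s}_\ell$ is comparable to the box, I will fall back on the direct lattice sums. The mean-field replacement Eq.~\eqref{eq::integral} turns each into $\det(\bm{h})$ times a convergent Gaussian moment integral. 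In either case $S^{\mathrm r},S^{\mathrm{nr}}=O(\det(\bm{h}))$, with a constant depending only on the SOG parameters. Since $\sum_\ell\widetilde{w}_\ell\lesssim \widetilde{w}_0/(1-\widetilde{b}^{-3})$, this sum over $\ell$ is bounded.

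The second ingredient is the statistics of $|\rho(\bm{k})|^4$ under DH theory (\ref{app::DH_theory}). Treating $\rho(\bm{k})$ as approximately complex Gaussian in the mean-field limit gives $\langle|\rho(\bm{k})|^4\rangle\simeq 2\langle|\rho(\bm{k})|^2\rangle^2$. The DH structure factor gives
\begin{equation*}
\langle|\rho(\bm{k})|^2\rangle\simeq N\bar q^2\,\frac{k^2}{k^2+\kappa^2},
\end{equation*}
where $\kappa$ is the inverse Debye length and $\bar q^2$ the mean squared charge. Hence $|\rho(\bm{k})|^4/k^2\lesssim N^2 k^2/(k^2+\kappa^2)^2\le N^2/\kappa^2$, which is bounded and removes the apparent $1/k^2$ singularity through charge neutrality. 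I will then convert the $\bm{k}$-sum into an integral by Eq.~\eqref{eq::integral}, producing a factor $\det(\bm{h})/(2\pi)^3$ times $\int k^{2+a-5}e^{-\widetilde{s}_\ell^2k^2/4}\,dk$. Each such integral is finite and of order $\widetilde{s}_\ell^{-(a-2)}$, which combines with $\widetilde{s}_\ell^a$ into a summable series in $\ell$.

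Collecting powers gives $\det(\bm{h})\cdot\det(\bm{h})\cdot N^2/\det(\bm{h})^4=\eta_{\mathrm{sys}}^2$, a constant at fixed density. The variance is therefore $C(\eta_{\mathrm{sys}},\kappa,\widetilde{w},\widetilde{s})/P$, independent of $N$.

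The main obstacle is the fourth-moment step. Justifying $\langle|\rho|^4\rangle\approx 2\langle|\rho|^2\rangle^2$, or some $O(N^2)$ bound, genuinely requires the Gaussian, mean-field character of DH theory, and the bound must hold uniformly down to the smallest $k=O(\det(\bm{h})^{-1/3})$. I will treat the low-$k$ modes separately: for them $|\rho|^4/k^2=O(N^2k^2/\kappa^4)$, and their number is finite, so they contribute negligibly. The integral approximation is then needed only for the bulk of the modes, and the sum over $\ell$ I expect to be routine.
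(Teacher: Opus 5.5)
Your proposal is correct and follows the paper's proof: start from Proposition~\ref{prop::Press_Var}, replace $|\rho(\bm{k})|^4$ by a Debye--H\"uckel estimate of order $N^2$, convert the $\bm{k}$-sums via Eq.~\eqref{eq::integral}, and use $S^{\mathrm r},S^{\mathrm{nr}}=O(\det(\bm{h}))$ to reach a bound of order $\eta_{\mathrm{sys}}^2/P$. The only difference is in the DH step. The paper uses the cruder $k$-independent bound $|\rho(\bm{k})|^4\le CN^2q_{\max}^4$ and lets the radial Jacobian $k^2$ absorb the $1/k^2$, which leaves the $\ell$-sum $\sum_\ell\widetilde{w}_\ell\widetilde{s}_\ell^4$. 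Your $k$-resolved structure factor $k^2/(k^2+\kappa^2)$ cancels the $1/k^2$ directly and leaves $\sum_\ell\widetilde{w}_\ell\widetilde{s}_\ell^2$, which stays bounded uniformly in $\widetilde{M}$, at the price of a $1/\kappa^2$ constant.
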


\begin{proof}
    By the DH theory in~\ref{app::DH_theory}, we have the bound
    \begin{equation}
        \label{eq::DH_upperbound}
        |\rho(\bm{k})|^4\le CN^2q_{\text{max}}^4,
    \end{equation}
    where $C$ is independent of $N$, and $q_{\text{max}}=\max\limits_{i}\{|q_i|\}$. We write $\widetilde{\bm{\chi}}:=\widetilde{\bm{\chi}}^{\text{r}}+\widetilde{\bm{\chi}}^{\text{nr}}$, with $\widetilde{\bm{\chi}}^{\text{r}}=\bm{P}_{\text{coul}}^{\mathcal{F},\text{r}}-\bm{P}_{\text{coul}}^{\mathcal{F},\text{r}*}$ and $\widetilde{\bm{\chi}}^{\text{nr}}=\bm{P}_{\text{coul}}^{\mathcal{F},\text{nr}}-\bm{P}_{\text{coul}}^{\mathcal{F},\text{nr}*}$. Substituting Eq.~\eqref{eq::DH_upperbound} into Eq.~\eqref{eq::Press_var} and using the integral approximation Eq.~\eqref{eq::integral}, we obtain
    \begin{equation}
        \label{eq::press_var_r}
        \begin{aligned}
\mathbb{E}\left(|\widetilde{\bm{\chi}}_{\mu\nu}^{\text{r}}|^2\right)&\le \frac{C_1}{P}\frac{\pi^{3/2} S^{\text{r}}\eta_{\mathrm{sys}}^2q_{\text{max}}^4}{(2\pi)^3\det(\bm{h})}\int_{0}^{\infty} 4\pi k^2 \left(\sum_{\ell=0}^{\widetilde{M}-1}\widetilde{w}_\ell \widetilde{s}_\ell^5e^{-\widetilde{s}_\ell^2k^2/4}\right)\frac{1}{k^2}\mathrm{d}k\\
&= \frac{C_1}{P}\frac{ S^{\text{r}}\eta_{\mathrm{sys}}^{2}q_{\text{max}}^4}{2\det(\bm{h})} \sum_{\ell=0}^{\widetilde{M}-1}\widetilde{w}_\ell \widetilde{s}_\ell^4.\\
        \end{aligned}
    \end{equation}
where $C_1/C$ is a constant arising from the integral approximation to the series over $\bm{k}$. Next, by Eq.~\eqref{eq::Sr}, the normalization constant $S^{\text{r}}$ satisfies
\begin{equation}
    \label{eq::Sr_calc}
    \begin{aligned}
        S^{\text{r}}\le C_2\sum_{\ell=0}^{\widetilde{M}-1}\widetilde{w}_\ell \widetilde{s}_\ell^5 \frac{\det(\bm{h})}{(2\pi)^3}\int_{0}^{\infty}4\pi k^4 e^{-\widetilde{s}_\ell^2 k^2/4}\mathrm{d}k=6\pi^{-3/2}C_2\det(\bm{h})\sum_{\ell=0}^{\widetilde{M}-1}\widetilde{w}_\ell,
    \end{aligned}
\end{equation}
where $C_2$ is again from the integral approximation. Since both $\sum_{\ell=0}^{\widetilde{M}-1}\widetilde{w}_\ell \widetilde{s}_\ell^4$ and $\sum_{\ell=0}^{\widetilde{M}-1}\widetilde{w}_\ell$ are $O(1)$, we have $S^{\text{r}}=O(\det(\bm{h}))$, and hence $\mathbb{E}\left(|\widetilde{\bm{\chi}}_{\mu\nu}^{\text{r}}|^2\right)=O(1/P)$. Similarly, for the non-radial term, using $k_\mu,k_\nu\le |\bm{k}|$, we have
 \begin{equation}
        \label{eq::press_var_nr}
        \begin{aligned}
\mathbb{E}\left(|\widetilde{\bm{\chi}}_{\mu\nu}^{\text{nr}}|^2\right)&\le \frac{C_1}{P}\frac{\pi^{3/2} S^{\text{nr}}\eta_{\text{sys}}^2q_{\text{max}}^4}{288\pi^3\det(\bm{h})}\int_{0}^{\infty} 4\pi k^2 \left(\sum_{\ell=0}^{\widetilde{M}-1}\widetilde{w}_\ell \widetilde{s}_\ell^7e^{-\widetilde{s}_\ell^2k^2/4}\right)\mathrm{d}k\\
&= \frac{C_1}{P}\frac{ S^{\text{nr}}\eta_{\text{sys}}^2q_{\text{max}}^4}{36\det(\bm{h})} \sum_{\ell=0}^{\widetilde{M}-1}\widetilde{w}_\ell \widetilde{s}_\ell^4,
        \end{aligned}
    \end{equation}
with 
\begin{equation}
    \label{eq::Snr_calc}
    \begin{aligned}
        S^{\text{nr}}
        \le C_2\sum_{\ell=0}^{\widetilde{M}-1}\widetilde{w}_\ell \widetilde{s}_\ell^7 \frac{\det(\bm{h})}{(2\pi)^3}\int_{0}^{\infty}4\pi k^6 e^{-\widetilde{s}_\ell^2 k^2/4}\mathrm{d}k=60\pi^{-3/2}C_2\det(\bm{h})\sum_{\ell=0}^{\widetilde{M}-1}\widetilde{w}_\ell, 
    \end{aligned}
\end{equation}
which yields $\mathbb{E}\left(|\widetilde{\bm{\chi}}_{\mu\nu}^{\text{nr}}|^2\right)= O(1/P)$. This completes the proof.
\end{proof}

We now discuss convergence of RBSOG-based MD in the NPT ensemble. For analysis, we consider the Langevin equations of motion~\cite{liang2022iso}:

\begin{equation}
    \label{eq::Langevin_NPT}
    \begin{cases}
\dot{\boldsymbol{r}}_i=\dfrac{\boldsymbol{p}_i}{m_i}+\dot{\boldsymbol{h}} \boldsymbol{h}^{-1} \boldsymbol{r}_i \\[0.5em]
\dot{\boldsymbol{p}}_i=\bm{F}_i-\boldsymbol{h}^{-\top} \dot{\boldsymbol{h}}^{\top} \boldsymbol{p}_i-\gamma \boldsymbol{p}_i+\sqrt{2 \gamma k_{\mathrm{B}} T m_i} \dot{\boldsymbol{W}}_i \\[0.5em]
\dot{h}_{\mu \nu}=\dfrac{p_{\mu \nu}^h}{M_{\mu \nu}} \\[0.5em]
\dot{p}_{\mu \nu}^h=\operatorname{det}(\boldsymbol{h})\left[\left(\bm{P}_{\text{ins}}-P_{\text{ext}}-\frac{k_{\mathrm{B}} T}{\operatorname{det}(\boldsymbol{h})}\right) \boldsymbol{h}^{-\top}\right]_{\mu \nu}
-\gamma_{\mu \nu} p_{\mu \nu}^h+\sqrt{2 \gamma_{\mu \nu} k_{\mathrm{B}} T M_{\mu \nu}} \dot{W}_{\mu \nu},
\end{cases}
\end{equation}
where $\{m_i\}_{i=1}^{N}$ and $\bm{M}$ are particle and (virtual) box masses, $\gamma$ and $\gamma_{\mu\nu}$ are damping coefficients, $\bm{W}_{i}$ and $W_{\mu\nu}$ are i.i.d. Wiener processes, and $P_{\text{ext}}$ is the external pressure. Let $\bm{X}^{t}=(\{\bm{r}_{i}\}_{i=1}^{N},\{\bm{p}_{i}\}_{i=1}^{N},\bm{h},\bm{p}^{\bm{h}})_{t}$ denote the trajectory generated by Eq.~\eqref{eq::Langevin_NPT} with exact pressure and force, and let $\bm{X}^{t,*}$ denote the trajectory using the random batch approximations. 
Specifically, we use the pressure estimator in Section~\ref{sec::Sampling} and the force estimator in~\ref{app::RBForce}. Assuming that $\{m_{i}\}_{i=1}^{N}$ and $\bm{M}$ are uniformly bounded, we obtain the following theorem.

\begin{theorem}
    \label{thm::config_strong}
    Suppose the exact pressure and forces are bounded and Lipschitz, and the corresponding random batch approximations are unbiased with bounded variance. Under synchronization coupling (the same initial condition and the same Wiener processes), for any $T>0$ there exists $C(T)>0$ such that
    \begin{equation}
        \label{eq::config_diff}
        \left(\mathbb{E} \left[\frac{1}{N} \sum_{i=1}^{N}\left|\bm{X}^{T}-\bm{X}^{T,*}\right|^2\right]\right)^{1 / 2} \le C(T) \sqrt{\Lambda(N) \Delta t},
    \end{equation}
    where $\Lambda(N)$ is the upper bound on the variance of the random approximation. In the mean-field regime, $\Lambda(N)$ is independent of $N$.
\end{theorem}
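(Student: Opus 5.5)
The strategy is the standard random-batch strong-error argument, as used for RBM and for RBE in the NPT setting, applied to the Langevin system \eqref{eq::Langevin_NPT}. We interpret the scheme as follows. On each interval $[t_n,t_{n+1})$ with $t_n=n\Delta t$, the random-batch trajectory $\bm{X}^{t,*}$ evolves with forces $\bm{F}_i^*(\bm{X}^*;\xi_n)$ and pressure $\bm{P}_{\mathrm{ins}}^*(\bm{X}^*;\xi_n)$. Here $\xi_n$ is the batch of Fourier modes drawn at $t_n$, independent of the past and of the Wiener increments. Both trajectories share the same initial data and the same $\bm{W}_i,W_{\mu\nu}$, so the noise cancels in the difference $\bm{Z}^t:=\bm{X}^t-\bm{X}^{t,*}$. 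We write the drift as $\bm{b}(\bm{X})$ and the random drift as $\bm{b}^*(\bm{X};\xi)=\bm{b}(\bm{X})+\bm{\chi}(\bm{X};\xi)$. By Proposition~\ref{prop::Press_Var} and the force analogue in~\ref{app::RBForce}, the fluctuation $\bm{\chi}$ has $\mathbb{E}_\xi\bm{\chi}=0$ and $\mathbb{E}_\xi|\bm{\chi}|^2\le\Lambda(N)$, with the per-particle normalization $\frac1N\sum_i$.

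\textbf{Step 1: energy identity.} Compute
\[
\frac{d}{dt}\,\mathbb{E}\,\tfrac1N|\bm{Z}^t|^2=\tfrac2N\,\mathbb{E}\big[\bm{Z}^t\cdot(\bm{b}(\bm{X}^t)-\bm{b}(\bm{X}^{t,*}))\big]-\tfrac2N\,\mathbb{E}\big[\bm{Z}^t\cdot\bm{\chi}(\bm{X}^{t,*};\xi_n)\big].
\]
The first term is bounded by $C\,\mathbb{E}\frac1N|\bm{Z}^t|^2$ using the Lipschitz hypothesis. A remark is needed here. The terms $\dot{\bm h}\bm h^{-1}\bm r_i$ and $\bm h^{-\top}\dot{\bm h}^\top\bm p_i$ are only locally Lipschitz, so we will work on the event where $\bm h$ stays in a compact set with $\det\bm h$ bounded below, and where the moments of $\bm{X}$ are bounded on $[0,T]$. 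This is justified by the boundedness assumptions on the forces, the pressure and the masses, together with standard moment bounds for Langevin dynamics. The bounded box masses $\bm{M}$ enter only through these constants.

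\textbf{Step 2: the cross term, which is the main obstacle.} We cannot use $\mathbb{E}[\bm{Z}^t\cdot\bm{\chi}]=0$ directly, because $\bm{Z}^t$ for $t>t_n$ already depends on $\xi_n$. Instead we freeze at the start of the interval and write
\[
\bm{Z}^t\cdot\bm{\chi}(\bm{X}^{t,*};\xi_n)=\bm{Z}^{t_n}\cdot\bm{\chi}(\bm{X}^{t_n,*};\xi_n)+\big[\text{increment over }[t_n,t]\big].
\]
The first piece has zero expectation by conditioning on $\mathcal{F}_{t_n}$, since $\xi_n$ is independent of $\mathcal{F}_{t_n}$ and unbiased. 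For the increment, we note that $\bm{Z}^t-\bm{Z}^{t_n}=O(\Delta t)$ in $L^2$. Its drift difference is bounded and includes $\bm{\chi}$, whose second moment is at most $\Lambda$, while the noise cancels. In addition, $\bm{\chi}(\bm{X}^{t,*};\xi_n)-\bm{\chi}(\bm{X}^{t_n,*};\xi_n)$ is controlled by the Lipschitz continuity of each sampled summand in $\bm{X}$ multiplied by $|\bm{X}^{t,*}-\bm{X}^{t_n,*}|=O(\sqrt{\Delta t})$ in $L^2$, this time because of the Brownian part. We then apply Cauchy--Schwarz. The delicate point is to avoid producing an $O(\sqrt{\Delta t})$ bound instead of $O(\Delta t)$. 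To do this, we will pair the Brownian increment of $\bm{X}^{*}$ with $\bm{\chi}(\cdot;\xi_n)$ and use independence of $\xi_n$ from $\bm{W}$ together with the martingale property, so that the leading contribution vanishes in expectation. The remaining terms are then bounded by $C\Lambda(N)\Delta t+C\,\mathbb{E}\frac1N|\bm{Z}^t|^2$.

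\textbf{Step 3: Gr\"onwall.} Combining the two steps gives
\[
\frac{d}{dt}u(t)\le C u(t)+C\Lambda(N)\Delta t,\qquad u(t):=\mathbb{E}\,\tfrac1N|\bm{Z}^t|^2,\quad u(0)=0.
\]
Gr\"onwall's inequality then yields $u(T)\le C(T)^2\Lambda(N)\Delta t$, and taking square roots gives \eqref{eq::config_diff}. Finally, for the mean-field claim we invoke Theorem~\ref{thm::conv_rate_P} and the analogous force variance bound. Under DH theory, these show that the per-particle variance bound is independent of $N$ at fixed density, so $\Lambda(N)=O(1)$.
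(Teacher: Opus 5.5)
Your skeleton (synchronous coupling so the noise cancels in $\bm{Z}^t$, differentiating $u(t)=\mathbb{E}\frac1N|\bm{Z}^t|^2$, freezing the cross term at $t_n$ to use conditional unbiasedness, then Gr\"onwall) is the standard random-batch strong-error argument that the paper cites without further detail. So the route is the same. The step that does not work as you describe it is the third piece of Step~2. You treat $\bm{\chi}(\cdot;\xi_n)$ as a function of the full state, so that $\bm{\chi}(\bm{X}^{t,*};\xi_n)-\bm{\chi}(\bm{X}^{t_n,*};\xi_n)$ carries $O(\sqrt{\Delta t})$ Brownian increments. You then propose to remove the leading part by pairing it with the Brownian increment and using the martingale property. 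Under the theorem's hypotheses (Lipschitz, not $C^2$) this device is not available. To extract a mean-zero stochastic integral from $\bm{\chi}(\bm{X}^{t,*})-\bm{\chi}(\bm{X}^{t_n,*})$ you must It\^o-expand $\bm{\chi}$, or linearize it with a controlled remainder. Either way you need a Lipschitz gradient (bounded second derivatives) of the sampled estimator. That is the kind of extra regularity assumed in first-order RBM analyses, where positions really do carry Brownian increments. With Lipschitz continuity alone, Cauchy--Schwarz and Young give only a source term $CL^2\Delta t$, which is not proportional to $\Lambda(N)$.

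The missing observation is structural, and it makes the step easy. Only the Coulomb force and the long-range Coulomb part of the pressure are sampled; the kinetic pressure is computed exactly. So $\bm{\chi}$ depends only on $(\{\bm{r}_i\},\bm{h})$ and $\xi_n$. In \eqref{eq::Langevin_NPT} the Wiener processes enter only the $\bm{p}_i$ and $\bm{p}^{\bm{h}}$ equations, so $\bm{r}_i$ and $\bm{h}$ are $C^1$ in time, with drifts of bounded second moment under the moment bounds you already invoke. Hence $(\bm{r},\bm{h})^{t,*}-(\bm{r},\bm{h})^{t_n,*}=O(\Delta t)$ in $L^2$. The Lipschitz continuity of the sampled estimator that you already assume, together with Young's inequality, then gives $2\bigl|\mathbb{E}[\bm{Z}^{t_n}\cdot(\bm{\chi}(\bm{X}^{t,*};\xi_n)-\bm{\chi}(\bm{X}^{t_n,*};\xi_n))]\bigr|\le u+C\Delta t^2$, with no martingale argument. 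The resulting differential inequality is $u'\le Cu+C(\Lambda(N)\Delta t+\Delta t^2)$, i.e.\ a bound $C(T)\sqrt{\Lambda\Delta t+\Delta t^2}$. The stated form $C(T)\sqrt{\Lambda(N)\Delta t}$ absorbs the $\Delta t^2$ term into the constant; you should say so explicitly rather than claim the remainder is $C\Lambda\Delta t$. Finally, on the mean-field clause: Theorem~\ref{thm::conv_rate_P} bounds the variance of $\widetilde{\bm{\chi}}$ itself, but the $\bm{p}^{\bm{h}}$ equation sees $\det(\bm{h})\,\widetilde{\bm{\chi}}\,\bm{h}^{-\top}$. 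So $N$-independence for the cell components also needs a suitable scaling of $\bm{M}$ or of the norm, a point the paper likewise leaves implicit.
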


The proof of Theorem~\ref{thm::config_strong} follows the arguments in~\cite{jin2021convergence,ye2024error} for the original random batch method~\cite{jin2020random}. In our setting, however, Coulomb contributions to pressure and force are singular when $\bm{r}_i=\bm{r}_j$ for some $i \neq j$, which makes a fully rigorous justification difficult. In practice, however, strong short-range repulsion in standard force fields (e.g., Lennard-Jones or Born-Mayer-Huggins terms~\cite{Frenkel2001Understanding,chen2026random}) suppresses such overlap events, so the bound in Eq.~\eqref{eq::config_diff} is expected to hold effectively. For other thermostat/barostat choices (e.g., Nos\'e--Hoover baths~\cite{martyna1994constant,hoover1985canonical}), we refer to~\cite{Jin2020SISC,liang2022iso}.

Finally, we analyze per step complexity of RBSOG. With cutoff $r_c$, the short-range part costs $O(N)$, while the long-range part costs $O(PN)$ under the random batch sampling and measure recalibration in Section~\ref{sec::Sampling}. Hence the overall complexity is $O(N)$ when $P=O(1)$. Numerical results in Section~\ref{sec::NumResult} show that $P\approx 100$ already provides satisfactory accuracy for water and ionic lipid simulations.

\section{Numerical results}
\label{sec::NumResult}

In this section, we present numerical experiments to assess the accuracy and efficiency of the proposed RBSOG method. We consider three representative systems: (i) the SPC/E bulk water system, (ii) a LiTFSI ionic liquid, and (iii) a dipalmitoylphosphatidylcholine (DPPC) membrane system. These examples cover simple molecular liquids, complex ionic solutions, and biologically relevant heterogeneous systems, respectively, and thus provide a broad test of the robustness and applicability of the method.

The RBSOG method is implemented as a self-maintained module in LAMMPS~\cite{thompson2022lammps} (version 21Nov2023). For comparison, we benchmark against the FFT-accelerated PPPM method~\cite{Hockney1988Computer} and another random batch-based RBE method~\cite{liang2022iso}, both implemented in the same package. For clarity, we denote the three methods simply as PPPM, RBE, and RBSOG in the figures. For Ewald splitting used in PPPM and RBE, the decomposition accuracy is set to \(10^{-5}\), and the splitting parameter $\alpha$ is automatically tuned by the native LAMMPS routines once \(r_c\) is specified. For the SOG-splitting-based RBSOG method, we use the second row of Table~\ref{table::para_error} in all tests, which yields a decomposition error comparable to that of Ewald splitting. All experiments are performed on the Siyuan Mark-I cluster at Shanghai Jiao Tong University, which comprises 936 compute nodes; each node is equipped with two Intel Xeon ICX Platinum 8358 CPUs (2.6~GHz, 32 cores per CPU) and 512~GB of memory.

\subsection{Accuracy results for the bulk water system}
\label{sec::Water}
We first validate the method on the SPC/E bulk water system~\cite{berendsen1987missing}. The system contains $24327$ SPC/E water molecules in a cubic box with initial side length $9~nm$. Initial velocities are sampled from a Maxwell distribution. All chemical bonds are constrained using the SHAKE algorithm~\cite{krautler2001fast}. We apply three-dimensional periodic boundary conditions and run NPT simulations at reference temperature $T=298~K$ and set external pressure $P_{\text{ext}}=1$ atm, with time step $\Delta t=0.5~fs$, using PPPM, RBE, and RBSOG. During equilibration ($200~ps$), the short-range Coulomb and Lennard-Jones interactions both use a cutoff of $0.9~nm$.

For structural properties, we examine oxygen-oxygen (O-O) and oxygen-hydrogen (O-H) radial distribution functions (RDFs) using different methods. The results are shown in Figure~\ref{fig:water_rdf}, where PPPM at $10^{-5}$ accuracy is used as the benchmark, RBE uses batch sizes $P=512, 1024$, and RBSOG uses $P=128, 256$. Both RBE and RBSOG accurately reproduce the structural features. However, zoomed views near the first peak in Figure~\ref{fig:water_rdf}(A-B) show that RBSOG with $P=128\ (256)$ achieves accuracy comparable to RBE with $P=512\ (1024)$, indicating an approximately fourfold variance reduction. Notably, in prior NVT results~\cite{liang2023random}, the variance reduction of RBSOG over RBE was only about twofold, highlighting the importance of the measure-recalibration technique in NPT. These findings are also reflected in dynamical properties, shown in Figure~\ref{fig:water_msd}. Figure~\ref{fig:water_msd}(A-B) reports the diffusion coefficient and shear viscosity, respectively. For SPC/E water, RBSOG with $P=128$ is sufficient to reproduce both structural and dynamical properties in NPT, whereas RBE requires $P=512$. This further highlights the variance-reduction advantage of RBSOG over RBE.

\begin{figure}[ht]
    \centering
    \includegraphics[width=1.0\textwidth]{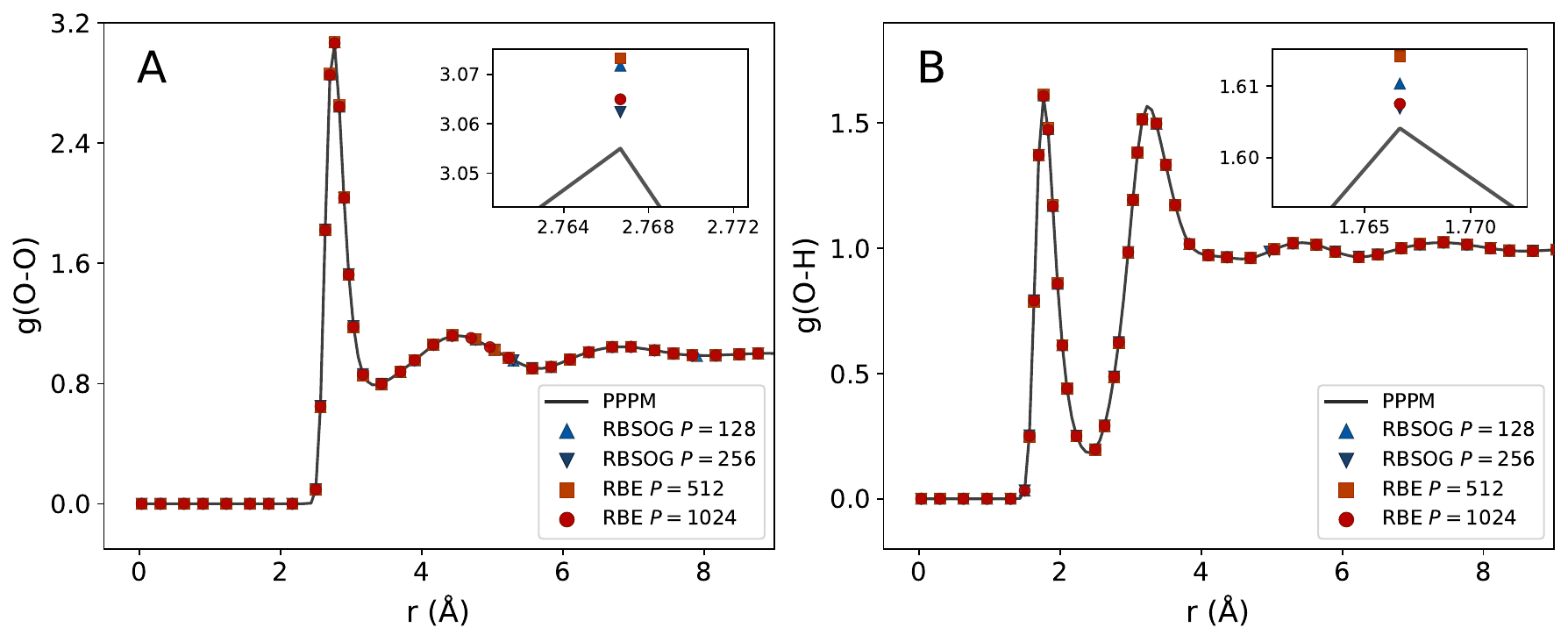}
    \caption{Radial distribution functions of (A) O-O and (B) O-H pairs in the bulk water system. PPPM results are shown as the benchmark (solid line). Results from RBSOG with the batch size $P = 128, \ 256$ and RBE with the batch size $P = 512, \ 1024$ are shown with different markers.}
    \label{fig:water_rdf}
\end{figure}

\begin{figure}[ht]
    \centering
    \includegraphics[width=1.0\textwidth]{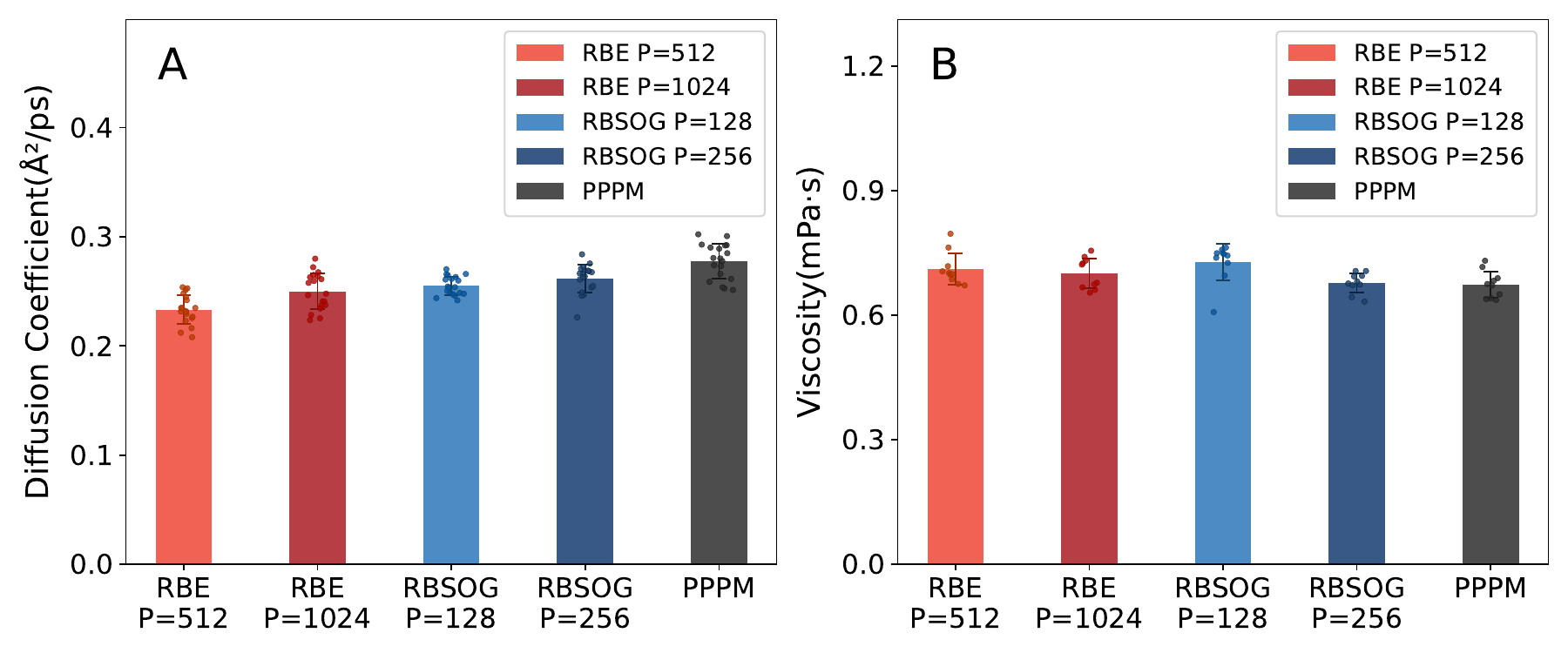}
    \caption{Diffusion coefficient and shear viscosity of bulk water in the NPT ensemble. Results from RBSOG ($P=128,256$) are compared with those from RBE ($P=512,1024$) and PPPM ($\Delta=10^{-5}$). Each quantity is averaged over multiple independent simulations; solid circles indicate different data points, and error bars denote one standard deviation.
    }
    \label{fig:water_msd}
\end{figure}

\subsection{Accuracy results for the LiTFSI ionic liquids}
\label{sec::LiTFSI}
The second accuracy test examines structural properties of LiTFSI ionic liquids at different concentrations. LiTFSI has strong potential in battery-material design because of its favorable low-temperature charge-discharge behavior~\cite{suo2015water}. As concentration varies, ionic solutions undergo fundamental structural changes, including salt-in-water and water-in-salt network states~\cite{malaspina2023unraveling}. RDFs therefore provide a stringent test of algorithmic accuracy for complex systems. We consider LiTFSI concentrations of $1, 5, 10,$ and $21$ mol/L (denoted 1M, 5M, 10M, and 21M). The Li$^+$ and TFSI$^-$ force fields follow Ref.~\cite{Salt_in_water}, while water uses TIP4P/2005~\cite{abascal2005general}. The system contains $8000$ water molecules, and the number of LiTFSI ions is set by concentration. The simulation cell is cubic with side length $7.4677~nm$ under periodic boundary conditions. Simulations are performed in the NPT ensemble at $T=298.0~K$, $P_{\text{ext}}=1$ atm, and time step $\Delta t = 0.5~fs$.

As a benchmark, we first compute lithium-oxygen in water (Li-OW) and lithium-nitrogen (Li-N) RDFs using PPPM at $10^{-5}$ accuracy. The results are shown in Figure~\ref{fig:ionic}. Figure~\ref{fig:ionic}(A) reports reference Li-OW RDFs at different concentrations; peak positions and amplitudes reflect ion-solvation strength and hydration-shell organization. Figure~\ref{fig:ionic}(B) reports reference Li-N RDFs, whose peak features characterize cation-anion interaction range and strength. For each concentration, we then compute the same RDFs with RBE and RBSOG using the same batch-size settings as in Section~\ref{sec::Water}. Figure~\ref{fig:ionic}(C-J) shows that RBSOG reproduces structural properties across concentrations with relatively small sample sizes. Compared with RBE, RBSOG exhibits substantially lower variance in these ionic systems, with a reduction larger than the fourfold improvement observed in bulk water. For example, at 10M, RBSOG with $P=128$ achieves comparable or better RDF accuracy than RBE with $P=1024$. These results highlight the advantage of RBSOG for challenging electrolyte simulations.

\begin{figure}[ht]
    \centering
    \includegraphics[width=0.9\textwidth]{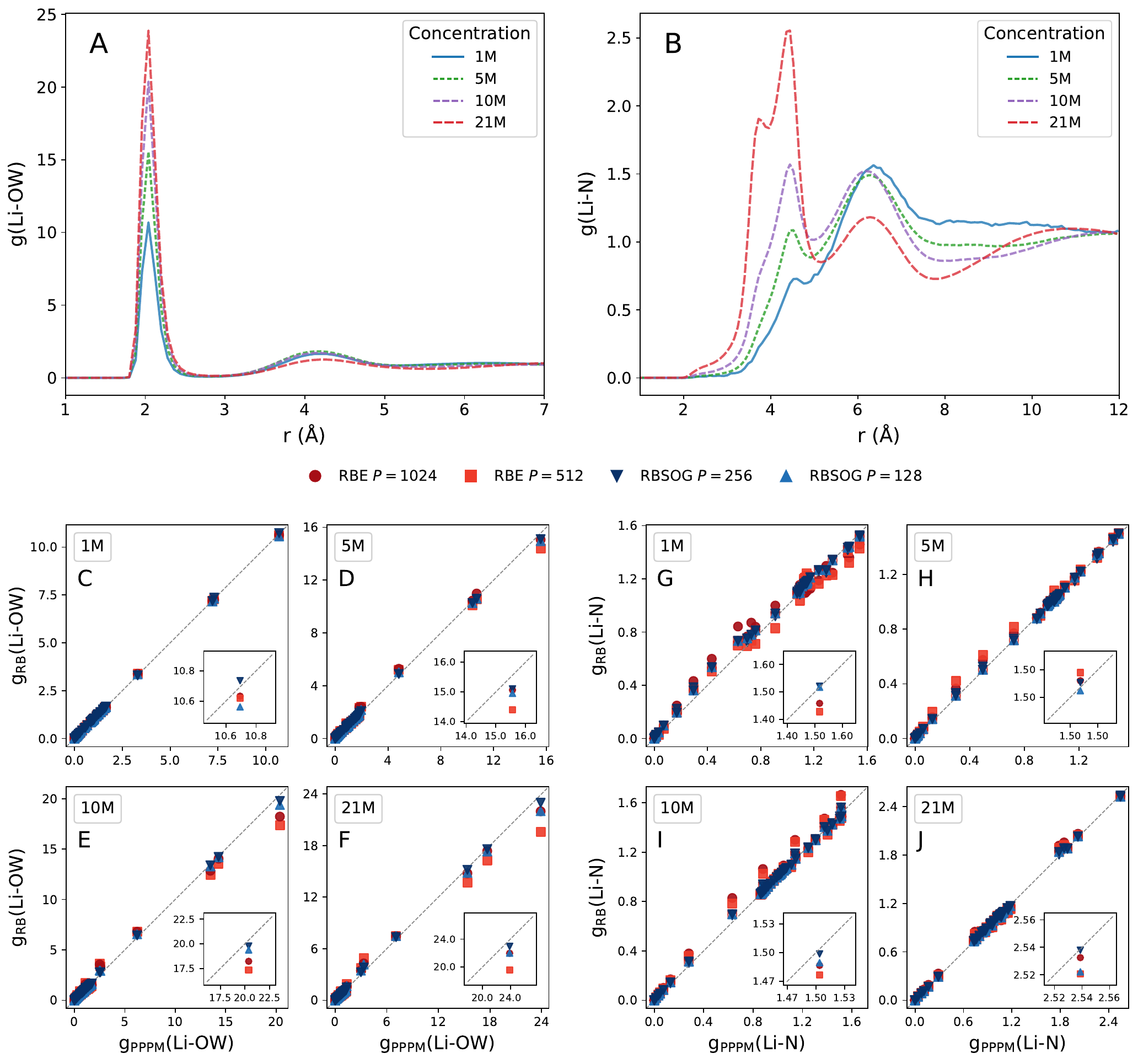}
    \caption{Radial distribution functions (RDFs) of Li-OW (water oxygen) and Li-N pairs in LiTFSI solutions at different concentrations. Panels (A) and (B) show the PPPM reference results computed with $\Delta=10^{-5}$. Panels (C)-(J) compare the results of random batch (RB)-type methods relative to PPPM: Li-OW RDFs in (C)-(F) and Li-N RDFs in (G)-(J), with RBSOG using $P=128,256$ and RBE using $P=512,1024$. Insets show magnified views of the peak regions.}
    \label{fig:ionic}
\end{figure}

\subsection{Accuracy results for the membrane systems}

Membrane systems represented by DPPC play important roles in biology. Simulating such flexible structures in the NPT ensemble is more challenging than the previous examples. In this test, the DPPC membrane is built with CHARMM-GUI~\cite{jo2008charmm} using the CHARMM36m force field~\cite{huang2017charmm36m}. After solvation, the system contains 318 DPPC lipids and 32295 SPC water molecules, for a total of 138225 atoms~\cite{mark2001structure}. The membrane is aligned in the $X$-$Y$ plane with the bilayer center at $Z=0$. Because of membrane geometry, we use semi-isotropic pressure coupling with coupled $X$- and $Y$-direction pressures. Equilibration is performed with the ``fix npt'' command in LAMMPS. We simulate the system with Eq.~\eqref{eq::Langevin_NPT} and the integrator developed in~\cite{liang2022iso}, using PPPM, RBE, and RBSOG, respectively. The reference temperature and pressure are set to $T=303.15~K$ and $P_{\text{ext}}=10^{-3}~Kbar$, with thermostat relaxation time $\gamma = 0.1~ps$, barostat relaxation time $\gamma = 0.5~ps$, and compressibility $4.5 \times 10^{-2}~Kbar^{-1}$. Each simulation runs for $10~ns$, and configurations are saved every 10000 steps ($10~ps$) for statistical analysis. For accurate membrane calculations, the real-space cutoff is set to $1.2~nm$, slightly larger than in the bulk-water and LiTFSI systems. 

For the DPPC bilayer system, membrane thickness and bilayer area are two primary validation metrics. Figure~\ref{fig:mem} shows the time evolution of these quantities, and the corresponding statistics are summarized in Table~\ref{tab:mem}. We compare RBSOG ($P=256$), RBE ($P=256$ and $1024$), and PPPM (tolerance $\Delta=10^{-5}$). Although both RBSOG and RBE qualitatively capture the membrane fluctuations observed with PPPM, RBE at $P=256$ exhibits noticeable deviations from the PPPM benchmark and larger fluctuations. In contrast, RBSOG at $P=256$ and RBE at $P=1024$ show much closer agreement with PPPM, with deviations in the mean membrane thickness and bilayer area below $1\%$. These results further demonstrate the variance-reduction advantage of RBSOG: RBSOG at $P=256$ achieves accuracy comparable to RBE at $P=1024$. Because the cost of the long-range component scales with batch size, this indicates that RBSOG can simulate complex membrane systems with lower computational load and reduced communication overhead.

\begin{figure}[ht]
    \centering
    \includegraphics[width=0.9\textwidth]{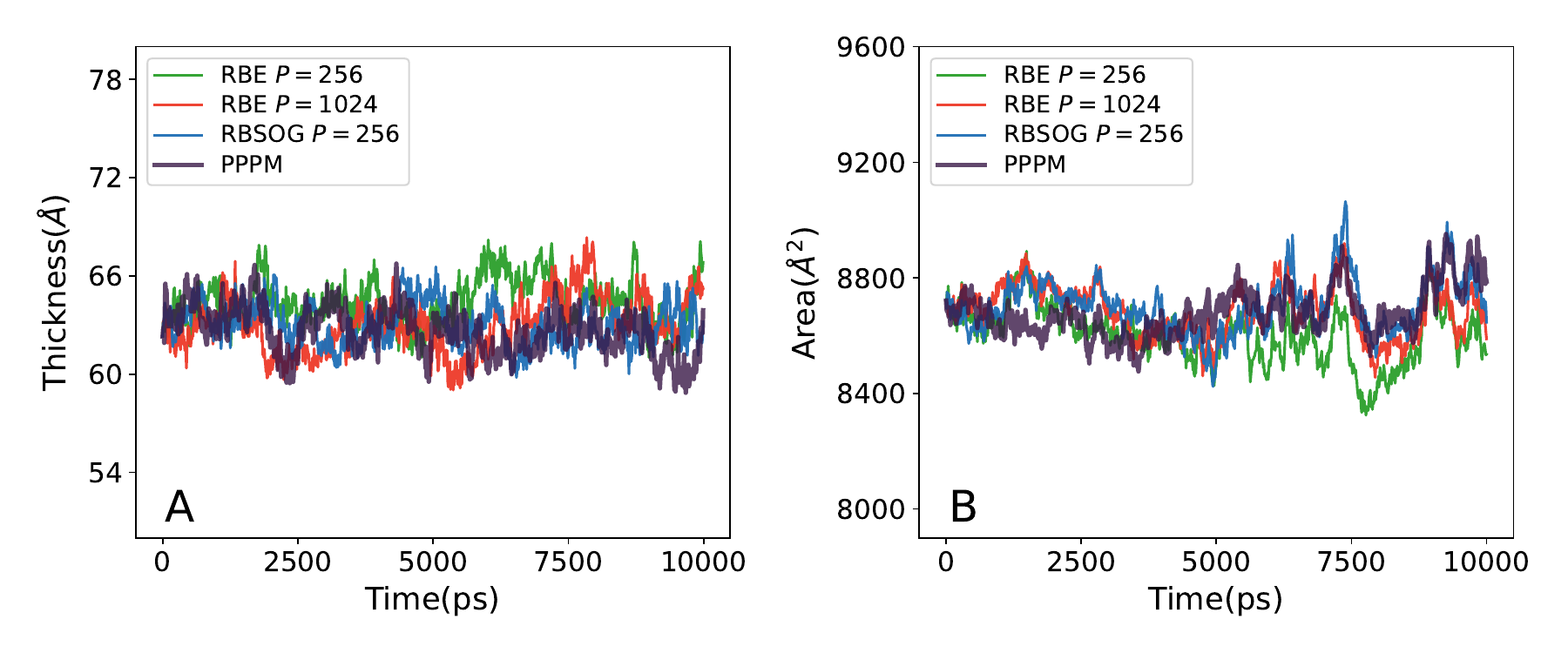}
    \caption{Time evolution of (A) the interbilayer thickness and (B) the area of the simulation box projected onto the $X$--$Y$ plane in DPPC membrane simulations. Data are shown for RBSOG with batch size $P=256$, compared with RBE at $P=256,1024$ and the reference PPPM results.}
    \label{fig:mem}
\end{figure}

\begin{table}[htbp]
\centering
\caption{Mean values and standard deviations of the interbilayer thickness and bilayer area projected onto the $X$--$Y$ plane for the DPPC membrane system. Statistics are computed from the simulation trajectories shown in Fig.~\ref{fig:mem}.}
\vspace{0.5em}
\resizebox{0.65\textwidth}{!}{ 
\begin{tabular}{c|c|cc|c|c} 
\hline
\multicolumn{2}{c|}{} & \multicolumn{2}{c|}{RBE} & RBSOG & PPPM \\
\hline
\multicolumn{2}{c|}{Batch size $P$} & 1024 & 256 & 256 & -- \\
\hline
\multirow{2}{*}{Thickness} & Mean (\AA)  & 63.009 & 64.390 & 63.029 & 62.619 \\
                           & Std. (\AA)   & 1.674  & 1.742  & 1.372  & 1.459 \\
\hline
\multirow{2}{*}{Area}      & Mean (\AA$^2$) & 8685.606 & 8603.001 & 8693.687 & 8678.882 \\
                           & Std. (\AA$^2$)  & 88.227   & 97.916   & 91.328   & 85.050 \\
\hline
\end{tabular}
}
\label{tab:mem}
\end{table}

\subsection{Time performance of the RBSOG-NPT method}
\label{sec::time}

In this subsection, we evaluate the computational and parallel efficiency of the proposed RBSOG method. Performance tests use the SPC/E bulk-water system described in Section~\ref{sec::Water}. To compare PPPM, RBE, and RBSOG at matched accuracy, we use batch sizes $P=512$ for RBE and $P=128$ for RBSOG, while PPPM is run at precision $10^{-5}$. We use elapsed wall-clock time as the performance metric, and all reported values are averaged over $1000$ simulation steps. We first validate linear complexity with respect to system size $N$. With the number of nodes fixed at $\zeta=1$ (64 CPU cores), Figure~\ref{fig:linear} reports computation times for SPC/E systems of varying sizes. The particle number is increased from about $2.2 \times 10^4$ to $1.4 \times 10^6$. Fitted lines on the log-log plot and the dashed $\mathcal{O}(N)$ reference confirm the expected linear scaling of RBSOG. 

\begin{figure}[ht]
    \centering  \includegraphics[width=0.55\textwidth]{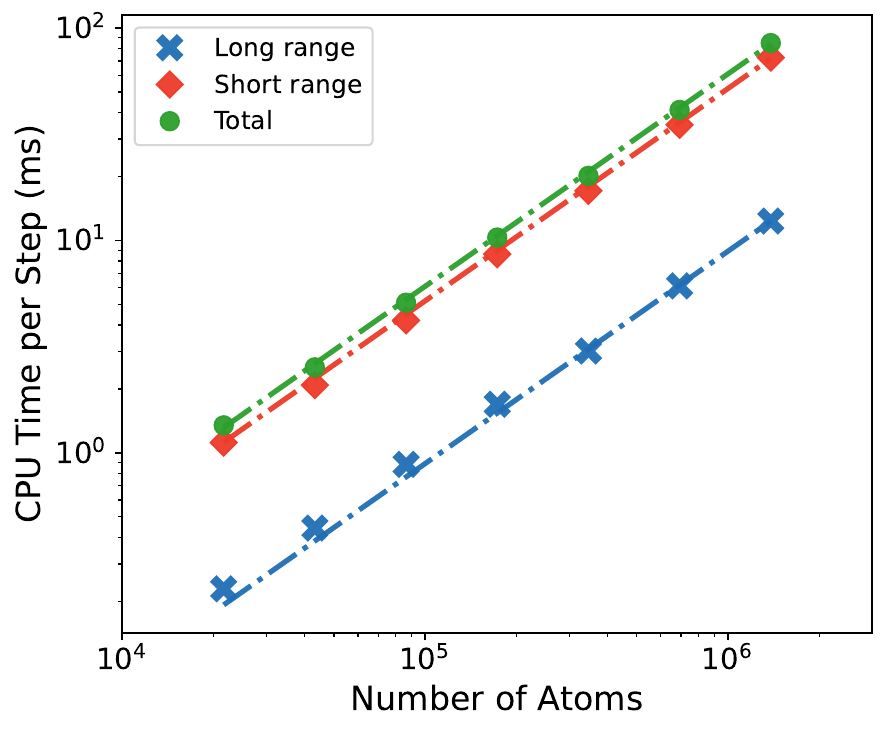}
    \caption{CPU time per step for long-range (Kspace), short-range (Pair), and total (Total) electrostatic calculations as a function of particle number, with the number of CPUs fixed at 64. Dashed lines indicate linear fits to the data.}
    \label{fig:linear}
\end{figure}

Next, we assess parallel scalability using weak- and strong-scaling metrics. In the weak-scaling tests, the number of particles per node (64 CPU cores per node) is fixed while both the node count and the total system size increase proportionally. We define weak-scaling efficiency as \(\eta^{\text{weak}}(\zeta)=t_{\min}/t(\zeta)\), where \(\zeta\) is the number of nodes, \(t(\zeta)\) is the runtime on \(\zeta\) nodes, and \(t_{\min}=t(1)\) is the runtime on a single node. Values closer to 1 indicate better weak scalability. In our experiments, the per-node workload is fixed at \(N_{\text{node}}=345984\) particles, and the node count increases from \(1\) to \(32\) in powers of two. CPU time per step and weak-scaling results are shown in Fig.~\ref{fig:weak}, where the total time for electrostatic calculations is decomposed into long-range and short-range components. For RBSOG and RBE, we use \(P=128\) and \(P=512\), respectively, as these settings yield roughly comparable accuracy in the earlier tests. On a single node, RBSOG delivers an \(18.3\times\) speedup over PPPM for long-range computations. At \(\zeta=32\), the speedup increases to \(38.4\times\) for long-range calculations and \(12.5\times\) for total runtime, with more than \(2\times\) higher weak-scaling efficiency in both components. Figures~\ref{fig:weak}(A) and (D) further show that RBSOG preserves the same weak scaling of RBE on the long-range component, while reducing CPU time by approximately a factor of $4$.

\begin{figure}[ht]
    \centering
    \includegraphics[width=1.0\textwidth]{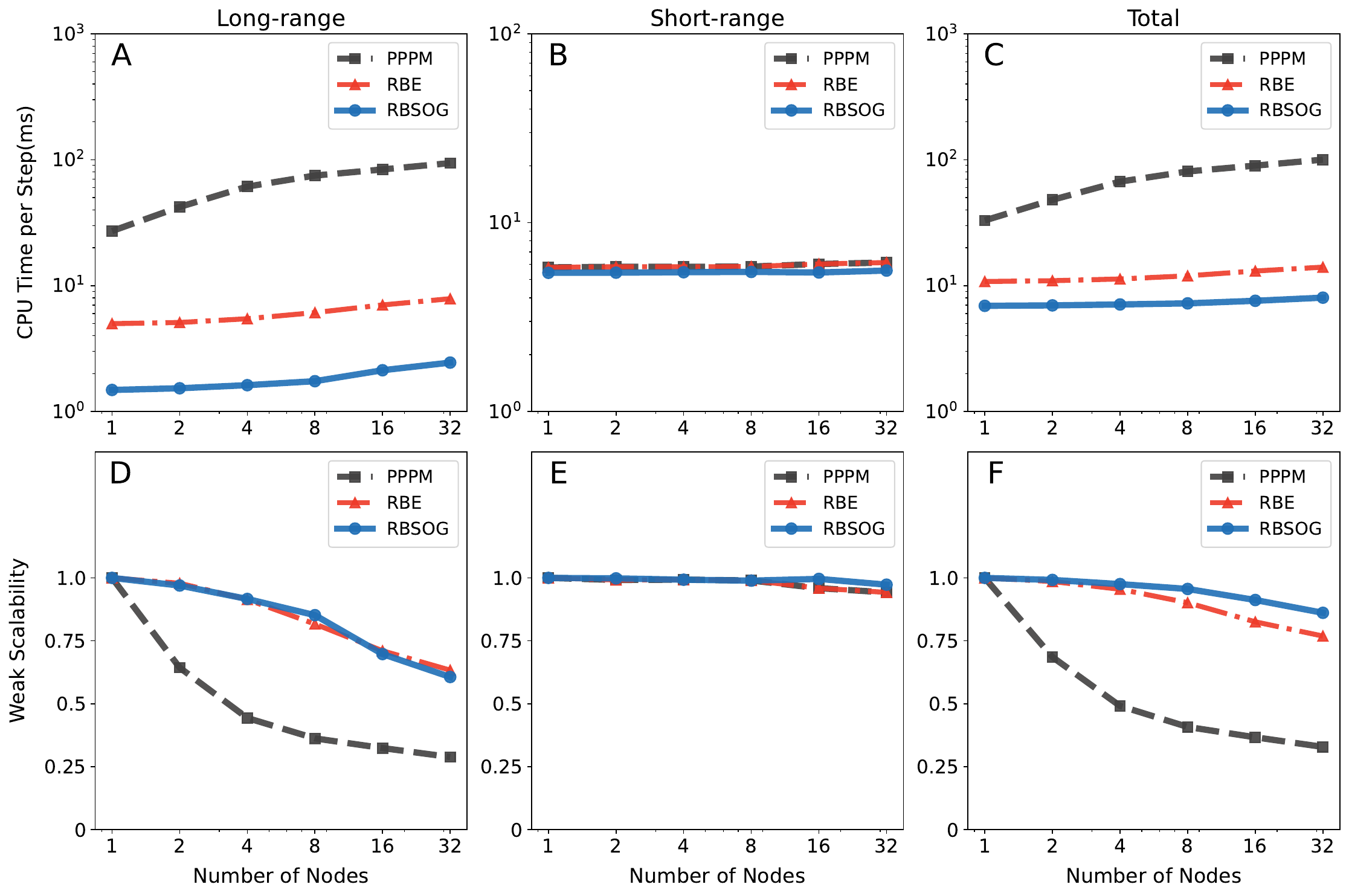}
    \caption{CPU time per step (A--C) and weak-scaling performance (D--F) for long-range, short-range, and total electrostatic interactions. Results are shown for RBSOG ($P=128$), RBE ($P=512$), and PPPM ($\Delta=10^{-5}$).}
    \label{fig:weak}
\end{figure}

Strong scaling is a complementary metric of parallel efficiency. Unlike weak scaling, strong scaling examines whether runtime decreases proportionally with increasing node count at fixed system size. We define the strong-scaling efficiency as $\eta^{\text{strong}}(\zeta)=\zeta_{\min} t_{\min}/(\zeta\, t(\zeta))$ where values closer to 1 indicate better strong scaling. Here, we fix the system size at \(N=11071488\) particles and increase the number of nodes (64 cores per node) from 1 to 32 in powers of two. The results are shown in Fig.~\ref{fig:strong}. For RBSOG and RBE, we again use \(P=128\) and \(P=512\), respectively. As expected, the FFT-based PPPM method relies on dense global communication, and its strong-scaling efficiency degrades sharply as node count increases. In contrast, RBSOG maintains more than \(90\%\) strong-scaling efficiency even at \(\zeta=32\). On a single node, RBSOG achieves a \(27.9\times\) speedup over PPPM for long-range computation and \(9.0\times\) for total runtime. At \(\zeta=32\), the speedups increase to \(53.1\times\) for the long-range part and \(15.0\times\) for total runtime, with approximately \(2\times\) higher strong scaling in both cases. RBE also shows reasonable scalability, but its long-range and total runtimes are about 3.5 and 1.5 times higher than those of RBSOG, respectively, mainly due to its requirement for a larger batch size. These results indicate that RBSOG is well suited for large-scale NPT simulations on modern clusters.

\begin{figure}[ht]
    \centering
\includegraphics[width=1.0\textwidth]{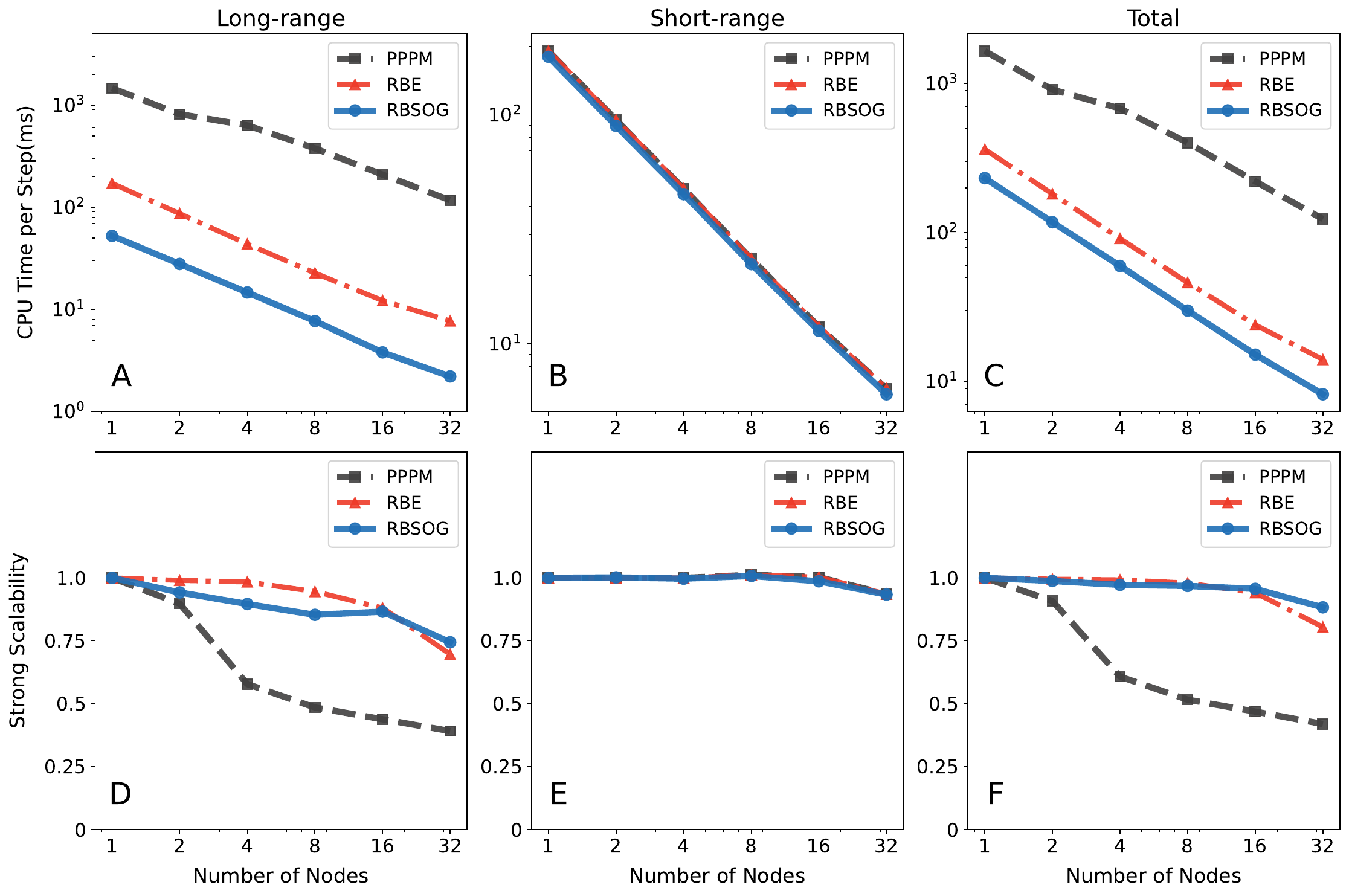}
    \caption{CPU time per step (A--C) and strong-scaling performance (D--F) for long-range, short-range, and total electrostatic interactions. Results are shown for RBSOG ($P=128$), RBE ($P=512$), and PPPM ($\Delta=10^{-5}$).}
    \label{fig:strong}
\end{figure}

\section{Conclusions}
\label{sec::Conclusion}

In this work, we developed an RBSOG method to accelerate MD simulations of charged systems in the NPT ensemble. The method introduces an SOG splitting of the pressure-related $1/r^3$ kernel and combines random batch importance sampling with a measure-recalibration strategy, so that Fourier modes can be reused for both radial and non-radial pressure components. This design yields an unbiased pressure estimator with substantially reduced sampling variance and mitigates pressure artifacts caused by cutoff discontinuities in traditional Ewald-based treatments, while preserving near-optimal $O(N)$ complexity when $P=O(1)$. On the theoretical side, we provide pressure-decomposition error estimates, consistency and variance analysis of the stochastic approximation, and convergence analysis of RBSOG-based MD. Numerical experiments on bulk water, LiTFSI ionic liquids, and DPPC membranes show that RBSOG accurately reproduces key structural and dynamical observables using small batch sizes ($P\sim 100$). In large-scale benchmarks up to $10^7$ atoms on 2048 CPU cores, RBSOG delivers about an order-of-magnitude speedup over PPPM in electrostatic calculations for NPT simulations, together with a consistent $4\times$ variance reduction relative to RBE at the same batch size and excellent weak/strong scalability. These results support RBSOG as a practical and efficient option for large-scale NPT simulations across a broad range of applications. Our future work will focus on GPU implementation and extending the method to quasi-2D systems~\cite{gan2025random,gao2025accurate}.

\section*{Acknowledgments}
This work is funded by the National Natural Science Foundation of China (grants No. 12426304, 12401570, 12325113 and 125B2023), and the Science and Technology Commission of Shanghai Municipality (grant No. 23JC1402300). The work of J. L. is partially supported by the China Postdoctoral Science Foundation (grant No. 2024M751948). The authors also acknowledge the support from the SJTU Kunpeng \& Ascend Center of Excellence.

\section*{Declarations}
\subsection*{Conflict of interest}
The authors declare that they have no conflict of interest.

\subsection*{Data availability}
The data that support the findings of this study are available from the corresponding author upon reasonable request.

\appendix
\section{Formulation of the energy and force}
\label{app::force_energy}
With the $u$-series splitting in Eq.~\eqref{eq::potential_split}, the Coulomb energy $U_{\text{coul}}$ is decomposed into short-range, long-range, and self-energy contributions, 
\begin{equation}\label{eq::couldecomp}
U_{\text{coul}}=U_{\text{coul}}^\mathcal{N}+U_{\text{coul}}^\mathcal{F}-U_{\text{self}}.
\end{equation}
The short-range part is given by
\begin{equation}
\label{eq::U_short}
	U_{\text{coul}}^\mathcal{N}(\{\bm{r}_i\},\bm{h})=\dfrac{1}{2}\sum_{\bm{n}\in\mathbb{Z}^3}\!^\prime\sum_{i,j=1}^{N}q_iq_j\mathcal{N}_b^{\sigma}(|\bm{r}_{ij}+\bm{h}\bm{n}|),
\end{equation}
and the long-range part admits the Fourier-space representation
\begin{equation}
\label{eq::U_long}
	U_{\text{coul}}^\mathcal{F}(\{\bm{r}_i\},\bm{h})
=\frac{1}{2\det(\bm{h})}\sum_{\bm{k}\neq \bm{0}}
\widehat{\mathcal{F}}_b^\sigma(\bm{k})\,\left|\rho(\bm{k})\right|^2,
\end{equation}
where 
\begin{equation}
\widehat{\mathcal{F}}_b^\sigma(\boldsymbol{k})=\sum_{\ell=0}^{M-1}\pi^{3/2}w_\ell s_\ell^{3}e^{-s_\ell^2 k^2/4}
\end{equation} 
denotes the Fourier transform of $\mathcal{F}_b^\sigma(\bm{r})$. 
The self-energy term removes the nonphysical self-interaction,
\begin{equation}
	U_{\text{self}}=\frac{\ln b}{\sqrt{2\pi}\sigma}\left(\omega+\frac{1-b^{-(M-1)}}{b-1}\right)\sum_{i=1}^{N}q_i^2.
\end{equation}
The Coulomb force $\bm{F}_{i,\text{coul}}=\bm{F}_{i,\mathrm{coul}}^{\mathcal{N}}+\bm{F}_{i,\mathrm{coul}}^{\mathcal{F}}$ follows from
$\bm{F}_i=-\nabla_{\bm{r}_i}U$, i.e., from differentiating
Eqs.~\eqref{eq::U_short}--\eqref{eq::U_long}. This yields
\begin{equation}
\label{eq::force_SOG}
\begin{aligned}
\bm{F}_{i,\mathrm{coul}}^{\mathcal{N}}
&=\sum_{\bm{n}\in\mathbb{Z}^3}\!^\prime\sum_{j=1}^{N}
q_i q_j\,
\left(\frac{1}{|\bm{r}_{ij}+\bm{h}\bm{n}|^3}-\sum_{\ell=0}^{M-1}\frac{2w_{\ell}}{s_{\ell}^2}e^{-|\bm{r}_{ij}+\bm{h}\bm{n}|^2/s_{\ell}^2}\right)\,
\bigl(\bm{r}_{ij}+\bm{h}\bm{n}\bigr)
\end{aligned}
\end{equation}
and
\begin{equation}
\bm{F}_{i,\mathrm{coul}}^{\mathcal{F}}=-\frac{q_i}{\det(\bm{h})}\sum_{\bm{k}\neq \bm{0}}
\bm{k}\widehat{\mathcal{F}}_b^\sigma(\bm{k})\,
\;
\text{Im}\!\left(e^{-i\bm{k}\cdot\bm{r}_i}\rho(\bm{k})\right).
\end{equation}
Finally, $U_{\mathrm{self}}$ is a constant when charges are fixed (i.e., in the absence of charge transfer or charge regularization effects), and therefore does not contribute to the force.

\section{Random batch approximation of force}\label{app::RBForce}
In NPT simulations, forces are also required to evolve the equations of motion (e.g., the Langevin dynamics \eqref{eq::Langevin_NPT}) at each MD step. In our implementation, the short-range force $\bm{F}_{i,\mathrm{coul}}^{\mathcal{N}}$ is truncated at a cutoff radius $r_c$ and evaluated in $O(N)$ cost using a neighbor list. For the long-range force contribution $\bm{F}_{i,\mathrm{coul}}^{\mathcal{F}}$, we adopt a random batch sampling strategy.

We start from the long-range force in \eqref{eq::force_SOG}. Since the SOG representation $\widehat{\mathcal{F}}_{b}^{\sigma}(\bm{k})$ is absolutely summable over the reciprocal lattice, the force can be written as an expectation over Fourier modes. We then approximate $\bm{F}_{i,\mathrm{coul}}^{\mathcal{F}}$ via importance sampling, following the original RBSOG idea developed for the NVT ensemble~\cite{liang2023random}. Concretely, at each step we draw a batch of nonzero integer vectors $\{\bm{m}_p\}_{p=1}^{P}\subset\mathbb{Z}^3\setminus\{\bm{0}\}$ from the discrete distribution proportional to $|\bm{k}|^2\widehat{\mathcal{F}}_{b}^{\sigma}(\bm{k})$ with $\bm{k}=2\pi\bm{h}^{-\top}\bm{m}$, i.e.,
\begin{equation}
\label{eq::prob_p}
	\mathscr{P}(\bm{m}) =\frac{1}{S}\,|2\pi\bm{h}^{-\top}\bm{m}|^2 \,\widehat{\mathcal{F}}_{b}^{\sigma}(2\pi\bm{h}^{-\top}\bm{m}),
\end{equation}
where $S$ is the normalization constant that makes $\mathscr{P}(\bm{m})$ a probability measure. With the sampled set $\{\bm{m}_p\}_{p=1}^{P}$ and $\bm{k}_p=2\pi\bm{h}^{-\top}\bm{m}_p$, the long-range force is approximated by
\begin{equation}
\label{eq::f_approx}
\bm{F}_{i,\mathrm{coul}}^{\mathcal{F}*}=-\frac{S}{P}\frac{q_i}{ \det(\bm{h})} \sum_{p=1}^P \frac{\boldsymbol{k}_p \operatorname{Im}\!\left(e^{-\mathrm{i} \boldsymbol{k}_p \cdot \boldsymbol{r}_i} \rho\!\left(\boldsymbol{k}_p\right)\right)}{\left|\boldsymbol{k}_p\right|^2}.
\end{equation}

Compared with sampling directly from the SOG kernel, the proposal \eqref{eq::prob_p} leads to better-behaved importance weights. In particular, as $|\boldsymbol{k}|\to 0$, one has
\begin{equation}
\left|\boldsymbol{k}\,\operatorname{Im}\!\left(e^{-\mathrm{i} \boldsymbol{k} \cdot \boldsymbol{r}_i} \rho(\boldsymbol{k})\right)\right| \sim |\boldsymbol{k}|^2,
\end{equation}
so the summand in \eqref{eq::f_approx} is approximately constant over the long-wavelength modes. Since these modes are especially important in periodic systems, the variance of the stochastic estimator is reduced. Similar to the pressure tensor evaluation, \eqref{eq::f_approx} can be computed for all particles in $O(PN)$ operations per step, which becomes $O(N)$ when $P=O(1)$.

\section{Verification of the virial theorem}\label{app::virialtheorem}
In Theorem~\ref{thm::virial}, we assume a consistent parameter setting for the decompositions of the Coulomb potential and pressure, namely
$b=\widetilde{b}$, $\sigma=\widetilde{\sigma}$, and $M=\widetilde{M}$.
Under this setting, direct calculation from Eqs.~\eqref{eq::P_long} and \eqref{eq::U_long} implies
$w_{\ell}=\widetilde{w}_{\ell}\widetilde{s}_{\ell}^2/2$ and $s_{\ell}=\widetilde{s}_{\ell}$, and hence
\begin{equation}
\label{eq::UF}
	\text{tr}\left(\bm{P}^{\mathcal{F}}_{\text{coul}}\right)-\frac{U_{\text{coul}}^\mathcal{F}}{\det(\bm{h})}
	=\frac{\pi^{3/2}}{4\det(\bm{h})^2}\sum_{\bm{k}\neq 0}\sum_{\ell=0}^{M-1}\widetilde{w}_\ell \widetilde{s}_\ell^5
	e^{-\widetilde{s}_\ell^2|\bm{k}|^2/4}|\rho(\bm{k})|^2\left[2-\frac{\widetilde{s}_\ell^2}{2}|\bm{k}|^2\right].
\end{equation}
Similarly, using Eqs.~\eqref{eq::P_short} and \eqref{eq::U_short}, we obtain
\begin{equation}
\label{eq::UN}
\begin{aligned}
-\text{tr}\left(\bm{P}^{\mathcal{N}}_{\text{coul}}\right)+\frac{U_{\text{coul}}^\mathcal{N}}{\det(\bm{h})}
=\frac{1}{2\det(\bm{h})}\sum_{\bm{n}}\!^\prime\sum_{i,j}q_iq_j
\left[\sum_{\ell=0}^{M-1}\widetilde{w}_\ell
e^{-|\bm{r}_{ij}+\bm{h}\bm{n}|^2/\widetilde{s}_\ell^2}
\left(\frac{\widetilde{s}_{\ell}^2}{2}+|\bm{r}_{ij}+\bm{h}\bm{n}|^2\right)\right].
 \end{aligned}
\end{equation}

To connect Eq.~\eqref{eq::UF} and Eq.~\eqref{eq::UN}, we rewrite the real-space sum in Eq.~\eqref{eq::UN} using discrete Fourier sums.
Let $f(\bm r)$ be a Schwartz function and denote its Fourier transform by $\hat f(\bm k)$.
By the Poisson summation formula (with $\bm k=2\pi\bm h^{-\top}\bm m$, $\bm m\in\mathbb Z^3$), we have
\begin{equation}\label{eq::Poisson}
\sum_{\bm{n}}\!^\prime\sum_{i,j}q_iq_j f(\bm{r}_{ij}+\bm{h}\bm{n})
=\frac{1}{\det(\bm h)}\sum_{\bm{k}}\hat f(\bm{k})|\rho(\bm{k})|^2-\sum_{i=1}^{N}q_{i}^2 f(\bm{0}).
\end{equation}
If the system is charge neutral, then $\rho(\bm 0)=0$ and the $\bm k=\bm 0$ term vanishes, so one may equivalently sum over $\bm k\neq \bm 0$. We define
\[
f_1(\bm r)=\sum_{\ell=0}^{M-1}\widetilde{w}_\ell e^{-|\bm r|^2/\widetilde{s}_\ell^2}\,|\bm r|^2,
\qquad
f_2(\bm r)=-\sum_{\ell=0}^{M-1}\frac{\widetilde{w}_\ell \widetilde{s}_{\ell}^2}{2} e^{-|\bm r|^2/\widetilde{s}_\ell^2}.
\]
Their Fourier transforms are
\begin{equation}
\begin{aligned}
\hat f_1(\bm{k})=\frac{\pi^{3/2}}{2}\sum_{\ell=0}^{M-1}\widetilde{w}_\ell \widetilde{s}_\ell^5
e^{-\widetilde{s}_\ell^2|\bm{k}|^2/4}\left[3-\frac{\widetilde{s}_\ell^2}{2}|\bm{k}|^2\right],\qquad\hat f_2(\bm{k})=-\frac{\pi^{3/2}}{2}\sum_{\ell=0}^{M-1}\widetilde{w}_\ell \widetilde{s}_\ell^5
e^{-\widetilde{s}_\ell^2|\bm{k}|^2/4}.
\end{aligned}
\end{equation}
Applying Eq.~\eqref{eq::Poisson} to $f_1$ and $f_2$ and substituting the results into Eq.~\eqref{eq::UN} yields
\begin{equation}
\label{eq::final_virial}
	\text{tr}\left(\bm{P}^{\mathcal{N}}_{\text{coul}}\right)-\frac{U_{\text{coul}}^\mathcal{N}}{\det(\bm{h})}
	=-\text{tr}\left(\bm{P}^{\mathcal{F}}_{\text{coul}}\right)+\frac{U_{\text{coul}}^\mathcal{F}}{\det(\bm{h})}-\frac{U_{\text{self}}}{\det(\bm{h})}.
\end{equation}
Rearranging terms in \eqref{eq::final_virial} gives Eq.~\eqref{eq::Virli_flexible} in Theorem~\ref{thm::virial}.

Moreover, to ensure force continuity, we enforce Eq.~\eqref{eq::2.22}, i.e.
\begin{equation}
    \label{eq::correlation_kernels}
    \begin{aligned}
        -\frac{d}{dr}\left[\frac{1}{r}-\mathcal{F}_{b}^{\sigma}(r)\right]\bigg|_{r=r_c}=-\frac{1}{r_c^2}+2r_c\sum_{\ell=0}^{M-1}w_{\ell}s_{\ell}^{-2}e^{-r_c^2/s_{\ell}^2}=-r_c\left[\frac{1}{r_c^3}-\widetilde{\mathcal{F}}_{b}^{\sigma}(r_c)\right]
        =0,
    \end{aligned}
\end{equation}
where the second equality follows from the weight relation $w_{\ell}=\widetilde{w}_{\ell}\widetilde{s}_{\ell}^2/2$ and $s_{\ell}=\widetilde{s}_{\ell}$.
Since Eq.~\eqref{eq::correlation_kernels} is equivalent to the pressure continuity condition Eq.~\eqref{eq::continuity}, the consistent truncation implies that the narrowest-Gaussian matching gives $\omega=\widetilde{\omega}$.
Therefore, both force and pressure are continuous under the same set of splitting parameters. This finishes the proof of Theorem~\ref{thm::virial}.

\section{Metropolis--Hastings algorithm}\label{app::MHsamp}
This section describes how we use the MH algorithm to draw samples from the proposal distribution \(\mathscr{P}^{\mathrm{r}}(\bm{k})\) defined in Eq.~\eqref{eq::samplingP}. Since Fourier modes lie on the reciprocal lattice, we write
\(\bm{k}=2\pi \bm{h}^{-\top}\bm{m}\),
where \(\bm{m}\in \mathbb{Z}^3\setminus\{\bm{0}\}\) is an integer vector. We therefore define a discrete target distribution on \(\bm{m}\) by
\(\mathscr{P}(\bm{m}) := \mathscr{P}^{\mathrm{r}}(2\pi \bm{h}^{-\top}\bm{m})\),
and generate samples \(\{\bm{m}_p^{\mathrm{r}}\}_{p=1}^{P^{\mathrm{r}}}\) from this distribution.

At each MH step, we propose a new integer vector \(\bm{m}^*\) as follows. We first draw three samples separately from the (continuous) normal distribution
\(u_\xi \sim \mathcal{N}\!\left(0,\; 1/(2\pi^2\widetilde{s}_0^2)\right)\) for \(\xi\in\{x,y,z\}\),
and set \(\bm{u}=(u_x,u_y,u_z)^\top\). We then map \(\bm{u}\) to the integer lattice via the proposal 
\(\bm{m}^*=\mathrm{round}(\bm{h}^{\top}\bm{u})\),
where \(\mathrm{round}(\cdot)\) is applied componentwise. To determine acceptance of $\bm{m}^*$, we evaluate the proposal distribution (which depends only on $\bm{m}^*$):
\begin{equation}
    \label{eq::proposal}
q(\bm{m}^*|\bm{m}) = q(\bm{m}^*)
 =\frac{(\sqrt{\pi} \widetilde{s}_0)^3}{\det(\bm{h})}
 \int_{\Omega_{\bm{m}^{*}}}
e^{-\pi^2 \widetilde{s}_0^2 \bm{\gamma}^{\top}(\bm{h}^{\top}\bm{h})^{-1}\bm{\gamma}}
 \mathrm{d}\bm{\gamma},
\end{equation}
where $\bm{m}$ is the current state of the Markov chain and $\Omega_{\bm{m}^{*}}=\prod_{\xi\in\{x,y,z\}}[m^*_\xi-\frac{1}{2},m^*_\xi+\frac{1}{2}]$. The MH acceptance probability is then given by
\begin{equation}
    \label{eq::acc_Gaussian}
    \mathrm{Acc}(\bm{m}^*, \bm{m})
    :=\min \left\{1,\;
    \frac{\mathscr{P}\left(\bm{m}^*\right) q\left(\bm{m}| \bm{m}^*\right)}
         {\mathscr{P}(\bm{m})\, q\left(\bm{m}^*|\bm{m}\right)}
    \right\}.
\end{equation}
In our setting, the acceptance rate is typically high because the proposal $q$ is designed to be close to the target $\mathscr{P}$. If the proposal is accepted, we set \(\bm{m}\leftarrow \bm{m}^*\); otherwise, we keep the current state. Repeating this MH step for \(P^{\mathrm{r}}\) iterations yields samples \(\{\bm{m}_p^{\mathrm{r}}\}_{p=1}^{P^{\mathrm{r}}}\), and we finally obtain \(\bm{k}_p^{\mathrm{r}}=2\pi \bm{h}^{-\top}\bm{m}_p^{\mathrm{r}}\), which follow the target distribution \(\mathscr{P}^{\mathrm{r}}(\bm{k})\).

\section{The Debye--H\"{u}ckel theory}
\label{app::DH_theory}

This section derives the estimate of $|\rho(\bm{k})|^4$ in Eq.~\eqref{eq::DH_upperbound} within the DH regime~\cite{levin2002electrostatic}. The DH theory considers the simplest model of an electrolyte solution confined to the simulation cell, where all $N$ ions are idealized as hard spheres of diameter $a$ carrying charge $\pm q$ at their centers. 
Let us fix one ion of charge $+q$ at the origin and consider the distribution of the other ions around it.
Inside the excluded region $0<r<a$, there are no other ions, hence the electrostatic potential satisfies $\nabla^2\phi=0$.
For $r\ge a$, the number density of each species is described by the Boltzmann distribution
$n_{\pm}(\bm r)=\rho_\infty \exp(\mp \beta q \phi(\bm r))$,
where $\rho_\infty=N/(2V)$ and $\beta=(k_B T)^{-1}$.
Under weak coupling, linearizing the Poisson--Boltzmann equation gives
\begin{equation}
\label{eq::Debye_revised}
-\varepsilon \nabla^2\phi(\bm{r})
= q\big(n_+(\bm{r}) - n_-(\bm{r})\big)
\approx - 2\beta q^2 \rho_\infty \,\phi(\bm{r})
= -\varepsilon \kappa^2 \phi(\bm{r}),
\end{equation}
with $\kappa^2=2\beta q^2\rho_\infty/\varepsilon$.
Solving \eqref{eq::Debye_revised} yields
\begin{equation}
\phi(\bm{r})=
\begin{cases}
\dfrac{q}{4\pi\varepsilon r}-\dfrac{q\kappa}{4\pi\varepsilon(1+\kappa a)}, & 0<r<a,\\[0.8em]
\dfrac{q\,e^{-\kappa(r-a)}}{4\pi\varepsilon r(1+\kappa a)}, & r\ge a,
\end{cases}
\end{equation}
and therefore the net charge density for $r>a$ is
$\rho_{>}(r)=-\varepsilon \kappa^2 \phi(r)$. Using $|\rho(\bm k)|^2=\sum_{i,j}q_i q_j e^{i\bm k\cdot(\bm r_i-\bm r_j)}$ and the DH solution of $\rho_{>}$, we have
\begin{equation}
\label{eq::rho_DH}
\begin{aligned}
|\rho(\bm{k})|^4
&
=\left|\sum_{i,j=1}^{N}q_iq_je^{i\bm{k}\cdot(\bm{r}_i-\bm{r}_j)}\right|^2\\
&\approx
\left(\sum_{i=1}^{N} q_i\left[q_i+\int_{\mathbb{R}^3\setminus \mathring{B}(a)}
\rho_{>}(\bm r)\,e^{i\bm k\cdot \bm r}\,d\bm r\right]\right)^2\\
&=
\left(\sum_i q_i^2\left[1-
\frac{\frac{\kappa}{k}\sin(ka)+\cos(ka)}{(1+\kappa a)\,(1+k^2/\kappa^2)}
\right]\right)^2.
\end{aligned}
\end{equation}
The bracketed factor is uniformly bounded for all $k\ge 0$ (it tends to $0$ as $k\to 0$ and to $1$ as $k\to\infty$),
hence there exists a constant $C$ such that
\begin{equation}
|\rho(\bm k)|^4 \le C \left(\sum_{i=1}^{N} q_i^2\right)^2 \le C N^2 q_{\max}^4,
\end{equation}
where $q_{\text{max}}:=\max_{i}|q_{i}|$. This completes the DH-based estimate.

	
\end{document}